\newtheorem{theorem}{Theorem}
\newtheorem{lemma}{Lemma}
\newtheorem{corollary}{Corollary}
\newtheorem{definition}{Definition}
\newtheorem{claim}{Claim}
\DeclareMathOperator{\E}{\mathbb{E}}
\newcommand{\PR}{\mathsf{PostR}}
\newcommand{\PostBQP}{\mathsf{PostBQP}}
\newcommand{\PP}{\mathsf{PP}}
\newcommand{\Q}{\mathsf{Q}_2}
\newcommand{\C}{\mathsf{C}}
\newcommand{\D}{\mathsf{D}}
\newcommand{\N}{\mathsf{N}}
\newcommand{\NQ}{\mathsf{NQ}}
\newcommand{\PQ}{\mathsf{PostQ}}
\newcommand{\ndeg}{\mathsf{ndeg}}
\newcommand{\PRe}{\PR_\epsilon}
\newcommand{\PRR}{\PR_{1/3}}
\newcommand{\A}{\mathcal{A}}
\newcommand{\rdeg}{\mathsf{rdeg}}
\newcommand{\PostBPP}{\mathsf{PostBPP}}
\renewcommand{\deg}{\mathsf{deg}}
\newcommand{\EE}{\mathop{\E}}
\newcommand{\n}{\frac{N}{2}}
\newcommand{\stsum}{\sum_{\substack{S,T \subseteq [N]\\ 1 \leq |S|+|T| \leq d}}}
\title{Post-selected classical query complexity}
\author[]{Chris Cade\footnote{\nolinkurl{chris.cade@bristol.ac.uk}}}
\affil[]{School of Mathematics, University of Bristol, UK}
\begin{document}
\maketitle

\begin{abstract}
We study classical query algorithms with post-selection, and find that they are closely connected to rational functions with nonnegative coefficients. We show that the post-selected classical query complexity of a Boolean function is equal to the minimal degree of a rational function with nonnegative coefficients that approximates it (up to a factor of two). For post-selected \emph{quantum} query algorithms, a similar relationship was shown by Mahadev and de Wolf~\cite{mahadev2015rational}, where the rational approximations are allowed to have negative coefficients. Using our characterisation, we find an exponentially large separation between post-selected classical query complexity and post-selected quantum query complexity, by proving a lower bound on the degree of rational approximations (with nonnegative coefficients) to the Majority function. This lower bound can be generalised to arbitrary symmetric functions, and allows us to find an unbounded separation between non-deterministic quantum and post-selected classical query complexity. All lower bounds carry over into the communication complexity setting. 

We show that the zero-error variants of post-selected query algorithms are equivalent to non-deterministic classical query algorithms, which in turn are characterised by nonnegative polynomials, and for some problems require exponentially more queries to the input than their bounded-error counterparts. Finally, we describe a post-selected query algorithm for approximating the Majority function, and an efficient query algorithm for approximate counting. 
\end{abstract}

\section{Introduction}\label{sec:intro}
Post-selection is the (hypothetical) power to discard all the runs of a computation in which a particular event does not occur. For instance, suppose that we have a Boolean formula, and we wish to find an assignment of values to the variables that makes the formula evaluate to true. If we are guaranteed that such an assignment exists, then post-selection allows us to easily find it: we just guess a random assignment of values to the variables, and then post-select on the event that the formula evaluates to true. Thus, an ordinarily difficult problem becomes trivial when we have the power to post-select. We might then ask the following question: given the power of post-selection, which problems become easy to solve, and which remain difficult? 

The complexity class that captures this model of computation is $\PostBPP$~\cite{han1997threshold,bpppath}: bounded-error probabilistic polynomial time ($\mathsf{BPP}$) with post-selection. More precisely, $\PostBPP$ is the class of languages $L$ for which there exist two polynomial-time randomised classical algorithms $\mathcal{A}$ and $\mathcal{B}$ such that for all $x$,
\begin{eqnarray*}
\text{If }x\in L \qquad \Pr[\mathcal{A}(x) = 1 | \mathcal{B}(x) = 1] \geq 2/3 \\
\text{If }x\notin L \qquad \Pr[\mathcal{A}(x) = 1 | \mathcal{B}(x) = 1] \leq 1/3 
\end{eqnarray*}
and $\Pr[\mathcal{B}(x) = 1] > 0$, where the probabilities are taken over the random bits given to the algorithm. We can think of the algorithm $\mathcal{B}$ as the `post-selector', which tells us whether or not to accept the output of $\mathcal{A}$. An alternative characterisation of the class is in terms of computational path lengths: here, in contrast to the class $\mathsf{BPP}$, we allow the computational paths of a randomised algorithm to have different lengths, so that an input is accepted if there are at least twice as many possible accepting paths than rejecting paths, and is rejected if there are at least twice as many possible rejecting paths than accepting paths. In this view we no longer use post-selection, and instead design our algorithms so that the number of paths that return the correct answer sufficiently overwhelm the number of paths that return the incorrect answer. For this reason $\PostBPP$ is sometimes called $\mathsf{BPP}_{\mathsf{path}}$. 

It is known that $\PostBPP$ contains $\mathsf{MA}$ and $\mathsf{P}^{\mathsf{NP[log]}}$ ($\mathsf{P}$ with an $\mathsf{NP}$ oracle that can be called a logarithmic number of times), and is contained in $\PP$, $\mathsf{BPP}^{\mathsf{NP}}$, and the third level of the polynomial hierarchy \cite{han1997threshold}. Hence, $\PostBPP$ lies somewhere between the first and third levels of the polynomial hierarchy. Its quantum counterpart, $\PostBQP$, was shown by Aaronson~\cite{aaronson2005quantum} to be equal to the class $\PP$. Toda \cite{toda1991pp} showed that any problem in the polynomial hierarchy ($\mathsf{PH}$) can be reduced to solving a polynomial number of problems in $\PP$, and hence that $\mathsf{PH} \subseteq \mathsf{P}^{\PP} $. Therefore, quantum post-selection appears to be more powerful than classical post-selection, but the precise difference in computational power is not entirely clear. Indeed, if $\PostBPP = \PostBQP$, then the polynomial hierarchy collapses -- something that is believed to be extremely unlikely. Conversely, the non-collapse of the polynomial hierarchy would imply that the computational powers of classical post-selection and quantum post-selection are far apart. 

This observation forms the basis of many results in the area of quantum computational supremacy\footnote{Quantum computational supremacy refers to the experimental realisation of a (perhaps non-universal) quantum computer that can unambiguously outperform any existing (or soon to exist) classical computer for some task, and is considered to be an important milestone in the field of quantum computation.}~\cite{harrow2017quantum,lund2017quantum}. In particular, if it were possible to efficiently simulate a quantum computer using a classical computer, then this would imply that a classical computer with post-selection could efficiently simulate a quantum computer with post-selection (i.e. that $\PostBPP = \PostBQP$) which, as noted above, is generally believed to be unlikely to be true. Moreover, many non-universal models of quantum computation become universal once post-selection is allowed~\cite{bremner2010classical}, and so even an efficient classical simulation of one of these restricted models of quantum computing would lead to a collapse of the polynomial hierarchy \cite{harrow2017quantum}. Therefore, the study of both classical and quantum computation with post-selection might yield interesting insights into the differences between classical and quantum computation in general.

~\\
In this work, we study the query analogue of the class $\PostBPP$. The query analogues of complexity classes can often shed light on the differences between various models of computation: for instance, it is possible to rigorously prove separations between the power of quantum and classical computation in the query model~\cite{aaronson2016separations}; however, we do not know how to prove such strong results in the context of circuit complexity, where important questions remain open. 

In the query model, we count only the number of times that an algorithm needs to access the bits of the input string to be able to compute some (usually Boolean) function, and not the total computation time. Recently, the query complexity of post-selected quantum computation was studied by Mahadev and de Wolf \cite{mahadev2015rational}: they showed that the post-selected quantum query complexity ($\mathsf{PostQ}$) of a Boolean function is equal, up to a constant factor of 2, to the minimum degree of a rational function that approximates that function. The link between post-selection and rational approximation was first pointed out by Aaronson \cite{aaronson2005quantum}, but was made rigorous in \cite{mahadev2015rational}. Here, we study the classical analogue of this query model, and find that the classical post-selected query complexity (which we name $\mathsf{PostR}$) is related in a similar way to the degrees of rational functions that have only positive coefficients. 

~\\
We remark that our lower bounds carry over to the communication complexity setting (see Section \ref{sec:comms} for a definition of this model). Post-selected classical communication complexity, which we write $\PR^{\mathsf{CC}}$, has been studied previously: as ``approximate majority covers'' in \cite{klauck2003rectangle}, and ``zero-communication protocols'' in \cite{gavinsky2014route}, where the term ``extended discrepancy'' was coined for the dual characterisation. G\"o\"os~et~al.~\cite{goos2016rectangles} proved a so-called `simulation theorem', showing that lower bounds in certain query models can yield lower bounds in their communication analogues. Our characterisation of post-selected classical query complexity in terms of the degrees of rational functions can therefore also be used to derive lower bounds in the communication complexity setting.

\subsection{Organisation}
Section \ref{sec:defs} provides some definitions that will be useful throughout the paper. In Section \ref{sec:overview}, we describe our main results, and then in Section \ref{sec:proofs} we prove these results. In Section \ref{sec:lower_bound}, we prove lower bounds on the post-selected classical query complexity of approximating the Majority function, and describe an optimal (up to constant factors) algorithm for doing so. In Appendix \ref{app:generalisation} we generalise the lower bound to arbitrary symmetric functions. Finally, in Section \ref{sec:counting} we describe an efficient post-selected query algorithm for approximate counting.

\section{Definitions}\label{sec:defs}
We begin by introducing and defining some concepts that will be helpful in understanding our results. Here and elsewhere, we use the notation $[N]$ to represent the set of integers $\{1,\dots,N\}$, and we write $|x|$ to denote the Hamming weight of the bit-string $x$. For an overview of complexity measures for Boolean functions and their relationships to each other, see for example the survey paper by Buhrman and de Wolf~\cite{buhrman1999bounds}.

\subsection{Polynomial approximations}
In this work we will concentrate on $N$-variate polynomials in which the input variables take on values from $\{0,1\}$. Such an $N$-variate polynomial is a function $P : \{0,1\}^N \rightarrow \mathbb{R}$, which can be written as $P(x) = \sum_{S \subseteq [N]} \alpha_S X_S$, where $X_S = \prod_{i \in S} x_i$, and the $\alpha_i$ are real coefficients. The \emph{degree} of $P$ is $\deg(P) = \max \{ |S| : \alpha_S \neq 0 \}$. Since $x_i^k = x_i$ for $k \geq 1$, we can restrict our attention to \emph{multilinear polynomials}, where the degree of each variable is at most 1 (and therefore the degree of any polynomial is at most $N$). 

For $\epsilon \in [0,1/2)$, we say that a polynomial $P$ $\epsilon$-approximates a function $f : D\subseteq \{0,1\}^N \rightarrow \mathbb{R}$, if $|f(x) - P(x)| \leq \epsilon$ for all inputs $x \in D$. The $\epsilon$-approximate degree of $f$, abbreviated $\deg_\epsilon(f)$, is the minimum degree over all polynomials that $\epsilon$-approximate $f$. The exact degree of $f$ is usually written as $\deg(f)$, and the `approximate degree' of $f$, for some constant $\epsilon$ bounded away from $1/2$ (usually $\epsilon$ is chosen to be $1/3$) is written as $\widetilde{\deg}(f)$.

\subsubsection{Non-deterministic polynomials}
We will also have reason to consider \emph{non-deterministic polynomials}. A non-deterministic polynomial for $f$ is a polynomial $P : \{0,1\}^N \rightarrow \mathbb{R}$ which is non-zero iff $f(x)=1$. The non-deterministic degree of a function $f$ is the smallest degree of a non-deterministic polynomial for $f$, and is written $\ndeg(f)$.

\subsubsection{Nonnegative literal polynomials}\label{sec:ldeg}
As well as non-deterministic polynomials, we will briefly consider \emph{nonnegative literal polynomials}, introduced by Kaniewski et al. in~\cite{kaniewski2015query}. These are $2N$-variate polynomials over the variables $\{x_i, (1-x_i) : i \in [N]\}$ with only nonnegative coefficients. For a function $f : \{0,1\}^N \rightarrow \mathbb{R}^+$, let the nonnegative literal degree for `yes'-instances, $\mathsf{ldeg}^+_1(f)$, be the minimum degree over nonnegative literal polynomials that \emph{equal} $f(x)$ for all $x\in\{0,1\}^N$. Similarly, let the nonnegative literal degree for `no'-instances, $\mathsf{ldeg}^+_0(f)$ be the minimum degree over nonnegative literal polynomials that equal $1-f(x)$ for all $x\in\{0,1\}^N$. Finally, define $\mathsf{ldeg}^+(f) = \max\{\mathsf{ldeg}_0^+(f)$, $\mathsf{ldeg}_1^+(f)\}$. The reason for this slightly non-standard definition of polynomial degree should become apparent later in this section.

\subsection{Rational functions}
A \emph{rational function}, or \emph{rational polynomial}, $R = P/Q$ is the ratio of two $N$-variate polynomials $P$ and $Q$, where $Q$ is defined to be non-zero everywhere to avoid division by zero. The degree of $R$ is defined to be the maximum of the degrees $P$ and $Q$. As before, we say that a rational function $\epsilon$-approximates a function $f$ if $|P(x)/Q(x) - f(x)| \leq \epsilon$ for all $x \in D$. The $\epsilon$-approximate rational degree $\rdeg(f)$ of $f$ is the minimum degree over all rational functions that $\epsilon$-approximate $f$. In this work, we will consider $2N$-variate rational functions over the variables $\{x_i, (1-x_i) : i \in [N]\}$ in which all of the coefficients of $P$ and $Q$ are nonnegative, and call such a function a rational function with positive coefficients (note that these are not as general as rational functions with arbitrary coefficients -- indeed, if $f$ is a rational function with positive coefficients then $f(x) \geq 0$ for $x \in [0,1]$). We denote by $\rdeg_\epsilon^+(f)$ the minimum degree over all rational functions with positive coefficients that $\epsilon$-approximate $f$.

\subsection{Post-selected classical query algorithms}

Here we define the complexity measure $\PR_\epsilon$: the query complexity of post-selected (randomised) classical computation with error $\epsilon$. First we precisely define what we mean by a classical post-selected query algorithm.
\begin{definition}
A classical post-selected query algorithm $\mathcal{A}$ consists of a probability distribution over deterministic decision trees\footnote{A (deterministic) decision tree is an adaptive algorithm for computing Boolean functions that chooses which input variable to query next based on the answers to the previous queries.}
that can output $0, 1,$ or $\perp$ (`don't know'). Given an input $x$, $\mathcal{A}$ chooses a decision tree from its distribution (independently of $x$), and outputs the answer of that decision tree on input $x$. For a Boolean function $f$, we require that the output of $\mathcal{A}$ on $x$ satisfies
\[
\Pr[\mathcal{A}(x) = f(x) | \mathcal{A}(x) \neq \perp] \geq 1-\epsilon
\]
for some choice of $\epsilon \in [0,1/2)$.
\end{definition}
Note that this definition is similar to the one given for $\PostBPP$ in Section \ref{sec:intro}, in which we consider a separate algorithm $\mathcal{B}$ to be the `post-selector'. It is easy to see that the two views are equivalent -- we can consider a single algorithm/decision tree that outputs $0, 1,$ or $\perp$; or two algorithms/decision trees that each output $0$ or $1$. 
\\
\\
Now we can define the complexity measure $\PRe$.
\begin{definition}\vspace{-0.5cm}
The $\epsilon$-approximate post-selected classical query complexity of a Boolean function $f$, written $\PRe(f)$, is the minimum query complexity over all classical post-selected query algorithms that $\epsilon$-approximate $f$. 
\end{definition}
One might ask what happens if we are allowed multiple rounds of post-selection: that is, what if we allow a post-selected query algorithm to be used as a subroutine inside another post-selected query algorithm? In this case, every Boolean function can be computed up to constant bounded error using only 1 query to the input: we can construct a post-selected query algorithm that decides, up to arbitrarily small one-sided error, whether some `guessed' string $y$ equals the input $x$. Then, using this as a subroutine, we can guess a string $y$ and post-select on it being equal to $x$. Then we output $f(y)$, which will be correct with bounded error (of, say, $1/3$), and requires only a single query to $x$ (made by the single call to the subroutine). A more rigorous proof of this observation is given in Appendix \ref{app:nested}.

\subsection{Certificate complexity and non-deterministic query complexity}\label{sec:certificate_def}
Let $f: D \subseteq \{0,1\}^N \rightarrow \{0,1\}$ be an $N$-bit Boolean function. Then we have the following definitions:
\begin{definition}[Certificate complexity]
A $b$-certificate for an input $x$ is an assignment $C_b : S \rightarrow \{0,1\}$ to some set $S \subseteq [N]$ of variables, such that $f(x) = b$ whenever an input $x$ is consistent with $C_b$. The size of the certificate $C_b$ is $|S|$. The $b$-certificate complexity $\C_b^x(f)$ of $f$ on input $x$ is the minimal size of a $b$-certificate that is consistent with $x$ (where $b=f(x)$). The $1$-certificate complexity is $\C_1(f) = \max_{x:f(x)=1} C^x(f)$. The $0$-certificate complexity $\C_0(f)$ is defined analogously. Finally, we define the certificate complexity of $f$ to be $\C(f) = \max\{\C_0(f), \C_1(f)\}$.
\end{definition}

\begin{definition}[Non-determinstic query algorithms]
A non-deterministic query algorithm for $b$-instances is a randomised algorithm whose acceptance probability is positive if $f(x) = b$, and zero if $f(x)=1-b$. Then the non-deterministic query complexity on input $x$, $\N_{f(x)}(f)$, is the minimum number of queries required by a classical algorithm to achieve the above behaviour on input $x$. The non-deterministic query complexity for $1$-instances is $\N_1(f) = \max_{x:f(x)=1} \N_1(f)$, and the non-deterministic query complexity for $0$-instances, $\N_0(f)$, is defined analogously. Finally, the non-deterministic query complexity of $f$ is defined to be $\N(f) = \max\{\N_0(f),\N_1(f)\}$.

If we replace the classical query algorithm with a quantum query algorithm, we obtain the quantum counterparts of the above complexity measures $\NQ_1, \NQ_0$ and $\NQ$.
\end{definition}
~\\
The following results from \cite{de2000characterization} and \cite{de2003nondeterministic} will prove useful later on:
\begin{lemma}[\cite{de2000characterization}, Proposition 1]\label{cor:NC}
For any Boolean function $f$, 
\[
\C_b(f) = \N_b(f)
\]
for $b\in\{0,1\}$. Therefore,
\[
\C(f) = \N(f).
\]
\end{lemma}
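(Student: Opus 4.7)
The plan is to prove the two inequalities $\N_b(f) \leq \C_b(f)$ and $\C_b(f) \leq \N_b(f)$ separately for each $b \in \{0,1\}$; the result for $\C(f) = \N(f)$ then follows by taking the max over $b$.

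For $\N_b(f) \leq \C_b(f)$, the idea is to convert certificates into a non-deterministic algorithm. Fix, for every input $y$ with $f(y)=b$, a minimum $b$-certificate $C_b^y$ of size at most $\C_b(f)$; let $\mathcal{F}$ be the (finite) family of such certificates. On input $x$, the non-deterministic algorithm samples a certificate $C \in \mathcal{F}$ uniformly at random, queries the variables in the support of $C$ (at most $\C_b(f)$ of them), and accepts iff the returned values match $C$ exactly. If $f(x) = b$ then $C_b^x \in \mathcal{F}$ is consistent with $x$, so the acceptance probability is strictly positive. If $f(x) = 1-b$ then no certificate in $\mathcal{F}$ is consistent with $x$ (otherwise $f(x)$ would equal $b$), so the acceptance probability is zero.

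For $\C_b(f) \leq \N_b(f)$, the idea is to extract a certificate from an accepting computation path. Given a non-deterministic query algorithm $\mathcal{A}$ for $b$-instances making at most $\N_b(f)$ queries, fix any input $x$ with $f(x) = b$. Since $\mathcal{A}$ accepts $x$ with positive probability, there is some setting of its random bits $r$ under which $\mathcal{A}(x,r)$ follows an accepting deterministic decision tree to a leaf labeled ``accept'' after at most $\N_b(f)$ queries. The sequence of queried variables together with their answers on input $x$ defines a partial assignment $C$ of size at most $\N_b(f)$. I claim $C$ is a $b$-certificate. For any $y$ consistent with $C$, running $\mathcal{A}$ on $y$ with the same random bits $r$ makes exactly the same queries and receives exactly the same answers, and therefore reaches the same accepting leaf. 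Thus $\mathcal{A}$ accepts $y$ with positive probability, and the defining property of a non-deterministic algorithm for $b$-instances forces $f(y) = b$.

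There is no substantial obstacle here — the lemma is essentially a translation between two equivalent combinatorial objects (minimum $b$-certificates and shortest accepting paths of non-deterministic decision trees). The only point that requires a small amount of care is the verification that any input consistent with a fixed accepting path is itself accepted under the same random bits, which hinges on the fact that decision-tree behaviour is determined entirely by the answers to the queries made along the path.
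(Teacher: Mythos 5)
Your proof is correct: both directions (certificates give a non-deterministic algorithm by sampling and checking a random certificate, and an accepting computation path yields a certificate because any input consistent with the queried values follows the same path) are sound, and together with taking the maximum over $b$ they give $\C(f)=\N(f)$. The paper does not prove this lemma itself but imports it from the cited reference, where essentially the same translation between minimum $b$-certificates and accepting paths of non-deterministic decision trees is used, so your argument matches the standard approach.
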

\begin{lemma}[\cite{de2003nondeterministic}, Theorem 2.3]
For any Boolean function $f$, 
\[
\NQ(f) = \ndeg(f) \leq N(f).
\]
\end{lemma}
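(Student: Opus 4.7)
The plan is to establish the equality $\NQ(f) = \ndeg(f)$ and the inequality $\ndeg(f) \leq \N(f)$ separately. The first follows the standard polynomial-method recipe, while the second is a direct combinatorial argument about decision trees.

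I would start with the easier bound $\ndeg(f) \leq \N(f)$. A non-deterministic classical query algorithm using at most $T = \N(f)$ queries is a distribution over deterministic decision trees of depth $T$, so its acceptance probability $p(x)$ is a nonnegatively weighted sum of conjunctions of the form $\prod_{i \in S_1} x_i \prod_{i \in S_0}(1-x_i)$, one per accepting root-to-leaf path, with $|S_0| + |S_1| \leq T$. Thus $p$ is a multilinear polynomial of degree at most $T$, and the non-deterministic promise gives $p(x) > 0 \iff f(x) = 1$. Hence $p$ is a valid non-deterministic polynomial for $f$.

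For $\ndeg(f) \leq \NQ(f)$, I would apply the polynomial method. After a standard reduction (ancilla padding so that acceptance is read off a single designated basis state), the amplitude $\alpha(x)$ of a $T$-query quantum algorithm is a complex multilinear polynomial in the input bits of degree $\leq T$, with $\alpha(x) \neq 0 \iff f(x) = 1$. Writing $\alpha(x) = a(x) + i\,b(x)$ with $a, b$ real polynomials of degree $\leq T$, I would extract a real polynomial of the same degree via $P(x) = a(x) + \lambda\,b(x)$ for a scalar $\lambda \in \mathbb{R} \setminus \{0\}$ chosen to avoid the finitely many bad values $\{-a(x)/b(x) : f(x) = 1,\ b(x) \neq 0\}$. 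The resulting $P$ is real, multilinear, of degree $\leq T$, and satisfies $P(x) \neq 0 \iff \alpha(x) \neq 0 \iff f(x) = 1$. The reverse direction $\NQ(f) \leq \ndeg(f)$ follows by the standard construction that converts a non-deterministic polynomial $P = \sum_{|S| \leq d} c_S X_S$ into a $d$-query quantum algorithm whose acceptance amplitude is proportional to $P(x)$: coherently index the monomials of $P$, query the $\leq d$ variables in the chosen monomial, and combine with the appropriate coefficients.

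The main technical obstacle is the first step above: extracting a single real polynomial of degree $T$ (rather than $2T$) whose zero set coincides with that of the complex amplitude $\alpha$. The naive choice $|\alpha(x)|^2$ (the acceptance probability) has degree $2T$; the key trick is the generic-$\lambda$ real combination of the real and imaginary parts of $\alpha$, which preserves the degree while still vanishing exactly on the $0$-inputs of $f$.
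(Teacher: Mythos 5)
This lemma is not proved in the paper at all: it is imported verbatim from de Wolf's \emph{Nondeterministic quantum query and communication complexities} (the reference \cite{de2003nondeterministic} cited in the statement), so there is no in-paper argument to compare against. Your reconstruction is correct and is essentially the proof given in that reference. The bound $\ndeg(f) \leq \N(f)$ via summing path-monomials of accepting leaves is exactly right (strictly you only need the $1$-instance algorithm, i.e.\ $\ndeg(f) \leq \N_1(f) \leq \N(f)$, but that is immaterial). For $\ndeg(f) \leq \NQ(f)$, the one step I would tighten is the ``ancilla padding'' reduction to a single accepting basis state: acceptance is in general a measurement onto a \emph{subspace}, and collapsing it to one basis state requires appending a query-free unitary whose relevant row is a generically chosen unit vector on the accepting subspace, so that the resulting single amplitude is nonzero whenever any accepting amplitude is. That is a valid move, but it is really the same generic-linear-combination trick you then apply to $\real$ and imaginary parts; de Wolf's proof skips the reduction and directly takes a generic real combination of the real and imaginary parts of \emph{all} accepting amplitudes, which avoids having to justify the padding step. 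Your converse direction, building a quantum algorithm whose accepting amplitude is proportional to $P(x)$ by superposing over monomials, is likewise the construction used there. So: correct, and the same route as the cited source, modulo that one presentational shortcut.
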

\noindent Finally, we have the following useful result:
\begin{lemma}[\cite{de2000characterization}]\label{lem:dewolf2}
Let $f$ be a non-constant symmetric function on $N$ variables. Suppose $f$ achieves value $0$ on $z$ Hamming weights $k_1, \dots, k_z$. Then $\ndeg(f) \leq z$.
\end{lemma}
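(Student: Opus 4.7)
The plan is to exhibit an explicit non-deterministic polynomial for $f$ of degree $z$. Since $f$ is symmetric, it depends only on the Hamming weight $|x| = x_1 + \cdots + x_N$, and by hypothesis the only weights on which $f$ vanishes are $k_1, \ldots, k_z$. A natural candidate is therefore
\[
P(x) \;=\; \prod_{i=1}^{z} \bigl( |x| - k_i \bigr) \;=\; \prod_{i=1}^{z} \Bigl( \textstyle\sum_{j=1}^{N} x_j - k_i \Bigr).
\]
First I would check that $P$ really is a non-deterministic polynomial for $f$: by construction $P(x)=0$ iff $|x|\in\{k_1,\ldots,k_z\}$, which by the symmetry and the hypothesis on $f$ happens iff $f(x)=0$. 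Hence $P(x)\neq 0$ iff $f(x)=1$, as required.

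Next I would bound the degree. Expanded as a polynomial in $x_1,\ldots,x_N$, $P$ has total degree exactly $z$, since it is a product of $z$ linear forms. This polynomial need not be multilinear, but the reduction modulo the relations $x_j^2 = x_j$ (valid on $\{0,1\}^N$) produces a multilinear polynomial $\widetilde{P}$ that agrees with $P$ on every Boolean input and has degree at most $z$. Since $\widetilde{P}(x)=P(x)$ for $x\in\{0,1\}^N$, it too is non-zero precisely when $f(x)=1$, so $\widetilde{P}$ is a non-deterministic polynomial for $f$ of degree at most $z$. This establishes $\ndeg(f)\leq z$.

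There is no real obstacle here; the only thing to be mindful of is the distinction between the formal degree of the product of linear forms and the degree of its multilinearization, which is handled by the standard observation already noted in Section \ref{sec:defs} that on Boolean inputs we may freely restrict to multilinear representatives without increasing the degree.
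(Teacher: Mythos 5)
Your proposal is correct and uses exactly the same construction as the paper: the product $\prod_{i=1}^{z}(|x|-k_i)$ of linear forms in $|x|=\sum_j x_j$, which vanishes precisely on the zero-weights of $f$ and has degree $z$. The additional remark about multilinearization is a harmless elaboration of what the paper leaves implicit.
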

\begin{proof}
Since $|x| = \sum_j x_j$, then $(|x| - k_1)(|x| - k_2)\cdots(|x| - k_z)$ is a non-deterministic polynomial for $f$ with degree at most $z$. 
\end{proof}
~\\
In \cite{kaniewski2015query}, Kaniewski et al. show that the minimal degree of (literal) polynomials with positive coefficients (i.e. those described in Section \ref{sec:ldeg}) exactly characterises the model of `classical query complexity in expectation': in this model, we consider (classical) query algorithms that, on input $x$, output a nonnegative random variable whose \emph{expectation} equals $f(x)$. For a function $f : \{0,1\}^N \rightarrow \mathbb{R}^+$, $\mathsf{RE}(f)$ is the smallest number of queries that an algorithm needs to make to obtain the above behaviour for all $x \in \{0,1\}^N$. 
In particular, and using the notation introduced in this section, they show:
\begin{lemma}[\cite{kaniewski2015query}, Theorem 9]
\label{lem:ldeg}
Let $f:\{0,1\}^N \rightarrow \mathbb{R}^+$. Then $\mathsf{RE}(f) = \mathsf{ldeg}^+_1(f)$.
\end{lemma}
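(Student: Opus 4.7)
The plan is to prove the two inequalities $\mathsf{RE}(f) \leq \mathsf{ldeg}^+_1(f)$ and $\mathsf{RE}(f) \geq \mathsf{ldeg}^+_1(f)$ separately, with the correspondence between sampling a monomial and traversing a decision tree path as the unifying idea. Both directions are standard once one notices that a literal monomial $\prod_{i \in S_+} x_i \prod_{j \in S_-} (1-x_j)$ is exactly the indicator function of the set of inputs that ``pass'' a certain partial assignment, which is what one tests by querying a set of $|S_+|+|S_-|$ variables in a fixed order.

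For the upper bound, suppose $P(x) = \sum_S \alpha_S M_S(x)$ is a nonnegative literal polynomial of degree $d$ that equals $f(x)$ everywhere, where each $M_S$ is a product of at most $d$ factors drawn from $\{x_i, 1-x_i\}$ and $\alpha_S \geq 0$. Let $A = \sum_S \alpha_S$ (assuming $A > 0$; otherwise $f \equiv 0$ and the result is trivial). The algorithm samples a term $S$ with probability $\alpha_S/A$, queries the (at most $d$) variables appearing in $M_S$, and outputs $A$ if $M_S(x) = 1$ and $0$ otherwise. This random variable is nonnegative, makes at most $d$ queries, and has expectation $\sum_S (\alpha_S/A) \cdot A \cdot M_S(x) = f(x)$.

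For the lower bound, consider any $\mathsf{RE}$-algorithm of query complexity $d$, viewed as a distribution $\mu$ over deterministic decision trees of depth at most $d$ whose leaves are labelled with nonnegative real outputs. Each leaf $\ell$ of a tree $T$ corresponds to a unique sequence of queries with fixed answers, specifying disjoint index sets $S_+^\ell, S_-^\ell$ (the variables forced to $1$ and $0$ along the root-to-leaf path), and reaching $\ell$ on input $x$ is the event $\prod_{i \in S_+^\ell} x_i \prod_{j \in S_-^\ell}(1-x_j) = 1$. If the value output at $\ell$ is $v_\ell \geq 0$, then
\[
f(x) = \E_{T \sim \mu}\left[\sum_{\ell \in T} v_\ell \prod_{i \in S_+^\ell} x_i \prod_{j \in S_-^\ell}(1-x_j)\right] = \sum_{T,\ell} \mu(T)\, v_\ell \prod_{i \in S_+^\ell} x_i \prod_{j \in S_-^\ell}(1-x_j),
\]
which is a nonnegative literal polynomial of degree at most $d$ equal to $f$ on every input, proving $\mathsf{ldeg}^+_1(f) \leq \mathsf{RE}(f)$.

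I do not expect a serious obstacle here: the argument is essentially the ``monomial sampling $\leftrightarrow$ path traversal'' dictionary that underlies many polynomial-method characterisations of query models, and both directions preserve nonnegativity automatically (because the outputs of the algorithm and the coefficients of the polynomial are both nonnegative by hypothesis). The only minor care needed is to handle the degenerate case $f \equiv 0$ in the upper bound, and to note that the depth of each decision tree in the support of $\mu$ bounds the degree of the induced polynomial, so that the worst-case query complexity (rather than an expected one) matches the degree.
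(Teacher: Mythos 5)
Your proof is correct, and both directions are the standard ``monomial sampling $\leftrightarrow$ decision-tree path'' argument. Note that the paper itself gives no proof of this lemma --- it is imported verbatim from Kaniewski et al.\ (Theorem 9 of the cited reference), where essentially the same two-inequality argument appears --- so there is nothing in the paper to diverge from; your write-up would serve as a self-contained justification of the cited result.
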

\noindent In the special case that $f$ is a Boolean function, a query algorithm that computes $f$ in expectation is (a slightly more constrained version of) a non-deterministic algorithm for yes-instances of $f$. Combining Lemma \ref{cor:NC} with a straightforward generalisation of this result, we have, for any Boolean function $f$,
\begin{corollary}\label{cor:ldeg}
$\C(f) = \N(f) \leq \mathsf{ldeg}^+(f)$.
\end{corollary}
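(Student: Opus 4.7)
The plan is to assemble the two displayed statements from Lemma \ref{cor:NC} and Lemma \ref{lem:ldeg}, plus a straightforward symmetrisation of the latter to the $b=0$ side. The equality $\C(f) = \N(f)$ is simply Lemma \ref{cor:NC}, so the real content is the inequality $\N(f) \leq \mathsf{ldeg}^+(f)$. Since $\N(f) = \max\{\N_0(f), \N_1(f)\}$ and $\mathsf{ldeg}^+(f) = \max\{\mathsf{ldeg}^+_0(f), \mathsf{ldeg}^+_1(f)\}$, it suffices to prove the inequality separately for each $b\in\{0,1\}$ and then take the max.

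For $b=1$, I would invoke Lemma \ref{lem:ldeg} to get $\mathsf{RE}(f) = \mathsf{ldeg}^+_1(f)$ and then argue that an algorithm computing $f$ in expectation is at least as strong as a non-deterministic algorithm for yes-instances. Concretely, suppose $\mathcal{A}$ makes $k$ queries and outputs a nonnegative random variable with $\mathbb{E}[\mathcal{A}(x)] = f(x)$. If $f(x)=0$ then nonnegativity plus zero expectation forces the output to be $0$ almost surely; if $f(x)=1$ then the output must be strictly positive with positive probability. Converting $\mathcal{A}$ into a non-deterministic query algorithm is then trivial: accept whenever the output is positive, reject when it is $0$. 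This new algorithm uses the same $k$ queries and has positive acceptance probability exactly on yes-instances, giving $\N_1(f) \leq \mathsf{RE}(f) = \mathsf{ldeg}^+_1(f)$.

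For $b=0$, I would apply exactly the same argument to the Boolean function $1-f$: Lemma \ref{lem:ldeg} yields $\mathsf{RE}(1-f) = \mathsf{ldeg}^+_1(1-f)$, and by the definitions in Section \ref{sec:ldeg} this equals $\mathsf{ldeg}^+_0(f)$ (a literal polynomial that computes $1-f(x)$ on $\{0,1\}^N$ is precisely what $\mathsf{ldeg}^+_0(f)$ minimises over). A non-deterministic algorithm for yes-instances of $1-f$ is the same thing as a non-deterministic algorithm for no-instances of $f$, so the same conversion gives $\N_0(f) \leq \mathsf{ldeg}^+_0(f)$. Taking the maximum over $b$ and combining with Lemma \ref{cor:NC} then yields the corollary.

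The main obstacle, modest though it is, is justifying the phrase ``straightforward generalisation'' — specifically, verifying that the constrained model underlying $\mathsf{RE}$ (output a nonnegative random variable with prescribed expectation) really does reduce to the looser non-deterministic model (nonzero acceptance probability iff $f(x)=1$) without any increase in queries. The argument above handles this cleanly because we only need the \emph{support} of the output distribution, not its expectation, so discarding the quantitative information in $\mathcal{A}$ costs nothing. Everything else is bookkeeping.
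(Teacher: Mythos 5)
Your proof is correct and follows exactly the route the paper intends: the paper merely asserts that a query algorithm computing a Boolean $f$ in expectation is ``a slightly more constrained version of'' a non-deterministic algorithm for yes-instances and invokes a ``straightforward generalisation'' of Lemma~\ref{lem:ldeg} together with Lemma~\ref{cor:NC}, which is precisely the $b=1$ reduction (accept iff the output is positive) and the $b=0$ reduction via $1-f$ that you spell out. Your version actually supplies the support-versus-expectation argument and the identification $\mathsf{ldeg}^+_1(1-f)=\mathsf{ldeg}^+_0(f)$ that the paper leaves implicit, so it is, if anything, more complete than the original.
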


\section{Overview of Results}\label{sec:overview}
Our main result is a tight connection between post-selected classical query algorithms and rational functions with positive coefficients. This result is essentially the classical analogue of the one shown by Mahadev and de Wolf in \cite{mahadev2015rational}. In particular, we show
\begin{restatable}{theorem}{maintheorem}
\label{theo:main}
For any Boolean function $f : D \subseteq \{0,1\}^N \rightarrow \{0,1\}$, and $\epsilon \in (0,1/2)$ we have
\[
\rdeg_\epsilon^+(f) \leq \PRe(f) \leq 2 \rdeg_\epsilon^+(f).
\]
\end{restatable}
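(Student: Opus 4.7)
The plan is to prove each inequality in the sandwich separately.

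For the lower bound $\rdeg_\epsilon^+(f) \leq \PRe(f)$, I would start from any post-selected classical query algorithm $\mathcal{A}$ of depth $d = \PRe(f)$, viewed as a distribution over depth-$d$ decision trees with outputs in $\{0, 1, \perp\}$. For any deterministic tree $T$, the indicator that $T(x) = b$ equals a sum over $b$-labelled leaves $\ell$ of the product $\prod_{(i, v_i) \text{ on path to } \ell} [x_i = v_i]$ of at most $d$ literal indicators, and is therefore a nonnegative literal polynomial of degree at most $d$. Averaging over the distribution, $P_b(x) := \Pr[\mathcal{A}(x) = b]$ remains a nonnegative literal polynomial of degree at most $d$ for $b \in \{0,1\}$, and the post-selection guarantee translates directly into the statement that the rational function $P_1/(P_1+P_0)$ has nonnegative literal coefficients, degree at most $d$, and $\epsilon$-approximates $f$.

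For the upper bound $\PRe(f) \leq 2\rdeg_\epsilon^+(f)$, I would take an optimal $R = P/Q$ of degree $d = \rdeg_\epsilon^+(f)$, writing $P = \sum_{S,T} \alpha_{S,T} \prod_{i \in S} x_i \prod_{j \in T}(1-x_j)$ and similarly $Q$ with coefficients $\beta_{S,T}$ (all nonnegative, $|S|+|T| \leq d$), and introduce sampling subroutines $\mathcal{S}_P, \mathcal{S}_Q$: $\mathcal{S}_P$ draws $(S,T)$ with probability $\alpha_{S,T}/a$ (where $a = \sum_{S,T}\alpha_{S,T}$), queries the variables in $S \cup T$, and returns $1$ iff all literals match, so that $\Pr[\mathcal{S}_P = 1 \mid x] = P(x)/a$ using at most $d$ queries; analogously for $\mathcal{S}_Q$ with $b = \sum_{S,T}\beta_{S,T}$. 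The full algorithm flips a $p$-biased coin: on heads it runs $\mathcal{S}_P$ twice with fresh randomness and outputs $1$ iff both accept (else $\perp$); on tails it runs $\mathcal{S}_Q$ twice and outputs $0$ iff both accept (else $\perp$). This is a distribution over decision trees of depth at most $2d$, and a direct calculation gives
\[
\Pr[\mathcal{A}(x) = 1 \mid \mathcal{A}(x) \neq \perp] = \frac{\lambda R(x)^2}{1+\lambda R(x)^2}, \qquad \lambda := \frac{p}{1-p} \cdot \frac{b^2}{a^2}.
\]
Tuning $p$ so that $\lambda = 1/[\epsilon(1-\epsilon)]$ (possible since $a, b > 0$), and combining with $R \leq \epsilon$ when $f = 0$ and $R \geq 1 - \epsilon$ when $f = 1$, yields the required $\epsilon$-approximation in $2d$ queries.

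The hardest step is justifying why the factor of $2$ is needed. A naive single-sample algorithm (using only $d$ queries) would give a conditional acceptance probability of the form $cR/(1+cR)$ for some tunable constant $c$; but the requirement $cR \leq \epsilon/(1-\epsilon)$ when $R \leq \epsilon$ forces $c \leq 1/(1-\epsilon)$, while the requirement $cR \geq (1-\epsilon)/\epsilon$ when $R \geq 1 - \epsilon$ forces $c \geq 1/\epsilon$, and these are incompatible whenever $\epsilon < 1/2$. Doubling the number of samples replaces $cR$ by $\lambda R^2$, and the quadratic gap between the $R \leq \epsilon$ and $R \geq 1 - \epsilon$ regimes is then wide enough that a single tunable scaling $\lambda = 1/[\epsilon(1-\epsilon)]$ meets both constraints simultaneously.
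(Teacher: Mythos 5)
Your proposal is correct and follows essentially the same route as the paper: the lower bound converts the tree distribution into nonnegative literal polynomials $P_1,P_0$ (the paper does this via a recursive root-to-leaf decomposition rather than a sum over leaves, but these are the same polynomial), and the upper bound is the paper's $k=2$ monomial-sampling algorithm with the conditional acceptance probability $P^2/(P^2+rQ^2)$, with your biased coin playing exactly the role of the paper's parameter $r=\epsilon(1-\epsilon)$. Your closing remark on why one sample cannot work is a nice addition but matches the paper's implicit use of Sherstov-style error amplification.
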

That is, the post-selected classical query complexity of a Boolean function is essentially equivalent to the minimal degree of a rational function with positive coefficients that approximates that function. As in the quantum case (\cite{mahadev2015rational}), we obtain a tight relation between the two complexity measures. This can be contrasted to the ordinary case (i.e. without post-selection), where the approximate degree $\widetilde{\deg}(f)$ of a function is only known to be polynomially related to the randomised query complexity $\mathsf{R}(f)$, and where polynomial gaps have indeed been shown to exist \cite{aaronson2016separations}.
\\
\\
Our next result concerns the special case of exact approximation:
\begin{restatable}{theorem}{zeroerror}
\label{theo:zero_error}
For any Boolean function $f : D \subseteq \{0,1\}^N \rightarrow \{0,1\}$ we have 
\[
\PR_0(f) = \N(f) = \C(f).
\]
\end{restatable}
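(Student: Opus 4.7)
The equality $\N(f) = \C(f)$ is already Lemma \ref{cor:NC}, so the content is to show $\PR_0(f) = \N(f)$. I would prove this by establishing the two inequalities separately, using $\N(f)$ as the lower bound side and $\C(f)$ as the upper bound side, and then invoking Lemma \ref{cor:NC} to close the loop.

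For the lower bound $\PR_0(f) \geq \N(f)$, let $\A$ be a zero-error post-selected query algorithm computing $f$ with $q = \PR_0(f)$ queries. I would build two non-deterministic query algorithms from $\A$. Define $\A_1$ to run $\A$ and accept iff $\A$ outputs $1$. Zero error means that whenever $\A(x) \neq \perp$, its output equals $f(x)$; in particular $\Pr[\A(x) = 1] = 0$ on every $0$-input, so $\A_1$ has acceptance probability $0$ on $0$-instances. On a $1$-input, the definition of a post-selected algorithm requires $\Pr[\A(x) \neq \perp] > 0$, and zero error forces this non-$\perp$ output to be $1$, so $\Pr[\A_1 \text{ accepts}] > 0$. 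Hence $\A_1$ is a non-deterministic algorithm for $1$-instances using at most $q$ queries, giving $\N_1(f) \leq q$. Swapping $0$ and $1$ gives $\N_0(f) \leq q$, so $\N(f) \leq \PR_0(f)$.

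For the upper bound $\PR_0(f) \leq \C(f)$, I would construct a zero-error post-selected algorithm by ``guessing a certificate.'' Let $\mathcal{C}$ be the finite collection of all tuples $(S, a, b)$ where $S \subseteq [N]$ has size at most $\C(f)$, $a : S \to \{0,1\}$ is an assignment, and $(S,a)$ is a valid $b$-certificate for $f$. By the definition of certificate complexity, for every input $x$ there is at least one $(S, a, f(x)) \in \mathcal{C}$ with $x$ consistent with $a$. The algorithm picks a tuple from $\mathcal{C}$ uniformly at random, queries the variables in $S$, outputs $b$ if $x$ is consistent with $a$, and outputs $\perp$ otherwise. This uses at most $\C(f)$ queries. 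It is zero-error because a valid certificate can only be consistent with inputs whose $f$-value matches its label, and the non-$\perp$ probability is positive because at least one certificate for $f(x)$ fits $x$.

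I do not anticipate a major obstacle here; the mild subtlety is in the lower-bound direction, where one has to verify that the two-sided zero-error guarantee combined with the post-selection non-triviality condition $\Pr[\A(x) \neq \perp] > 0$ is exactly what forces the ``accept iff output $b$'' rule to have positive acceptance probability on $b$-instances and zero acceptance probability on $(1-b)$-instances, matching the definition of $\N_b$. Combining the two bounds with Lemma \ref{cor:NC} yields $\N(f) \leq \PR_0(f) \leq \C(f) = \N(f)$, which proves the theorem.
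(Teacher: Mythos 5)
Your proposal is correct and follows essentially the same route as the paper: the upper bound $\PR_0(f) \leq \C(f)$ via guessing a random certificate and outputting $\perp$ on inconsistency, and the lower bound $\PR_0(f) \geq \N(f)$ by converting the $\perp$ outcome into a reject (resp.\ accept) to obtain a non-deterministic algorithm for $1$-instances (resp.\ $0$-instances), then closing the loop with $\N(f) = \C(f)$. Your explicit remark that zero error plus the positivity of $\Pr[\A(x) \neq \perp]$ is what forces the acceptance probability to be positive exactly on $b$-instances is the right subtlety to flag, and is handled the same way (if more tersely) in the paper.
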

\noindent Combining this result with Corollary \ref{cor:ldeg}, we have an upper bound on the complexity of zero-error post-selected classical query algorithms:
\begin{corollary}
$\PR_0(f) \leq \mathsf{ldeg}_+(f)$.
\end{corollary}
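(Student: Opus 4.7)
The plan is to observe that this corollary is an immediate consequence of chaining Theorem~\ref{theo:zero_error} with Corollary~\ref{cor:ldeg}, so there is essentially no new mathematical content beyond bookkeeping. First I would invoke Theorem~\ref{theo:zero_error} to rewrite $\PR_0(f)$ as the non-deterministic (equivalently, certificate) query complexity $\N(f)$. Then I would apply Corollary~\ref{cor:ldeg}, which states $\N(f) \leq \mathsf{ldeg}^+(f)$, to conclude. Chaining these two gives $\PR_0(f) = \N(f) \leq \mathsf{ldeg}^+(f)$, as required.

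Since this is the whole argument, the only thing to think about is whether the hypotheses of the two invoked results line up: Theorem~\ref{theo:zero_error} is stated for arbitrary Boolean functions $f : D \subseteq \{0,1\}^N \to \{0,1\}$, and Corollary~\ref{cor:ldeg} is stated for any Boolean function $f$, so there is no mismatch in the setting. No subtle compatibility condition on domains or promise sets needs to be checked beyond noting that both results apply to the same class of Boolean functions.

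There is no genuine obstacle here; the substance of the corollary lives entirely inside Theorem~\ref{theo:zero_error} (the equivalence of zero-error post-selected query complexity with certificate complexity) and inside Lemma~\ref{lem:ldeg} of Kaniewski et al., which underlies Corollary~\ref{cor:ldeg}. The only mild subtlety worth flagging in the write-up is that $\mathsf{ldeg}^+(f)$ is defined via the max of $\mathsf{ldeg}_0^+(f)$ and $\mathsf{ldeg}_1^+(f)$, matching the two-sided nature of $\N(f) = \max\{\N_0(f), \N_1(f)\}$, so the inequality is indeed the natural two-sided one rather than a one-sided statement about yes-instances only. With that remark noted, the proof is a single displayed chain of (in)equalities citing the two prior results.
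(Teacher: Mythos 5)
Your proof is correct and is exactly the paper's argument: the corollary is stated immediately after Theorem~\ref{theo:zero_error} as the combination of $\PR_0(f) = \N(f)$ with Corollary~\ref{cor:ldeg}'s bound $\N(f) \leq \mathsf{ldeg}^+(f)$. Nothing further is needed.
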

~\\
\noindent
There is a long-standing open question, attributed to Fortnow, but given by Nisan and Szegedy in \cite{nisan1994degree}, which asks: is there a polynomial relation between the \emph{exact} rational degree of a (total) Boolean function $f$ and its usual polynomial degree? In \cite{mahadev2015rational}, Mahadev and de Wolf recast this question in the context of post-selection algorithms: can we efficiently simulate an exact quantum algorithm with post-selection by a bounded-error quantum algorithm without post-selection? 

By considering exact \emph{classical} algorithms with post-selection, it is tempting to say that we have resolved this question in the case where the rational approximation can have only positive coefficients, but, unfortunately, the results of Theorem \ref{theo:main} break down when $\epsilon = 0$. What we \emph{can} state is the following:
\begin{corollary}
Given a degree-$d$ rational function with only positive coefficients that exactly represents a Boolean function $f$, it is possible to construct a post-selected classical query algorithm $\mathcal{A}$ with query complexity at most $d$, such that when $f(x) = 1$, $\mathcal{A}(x) = 1$ with certainty, and when $f(x)=0$, $\Pr[\mathcal{A}(x) = 0] \geq \frac{1}{1+\delta}$ for arbitrary $\delta>0$. 
\end{corollary}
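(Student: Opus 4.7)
The plan is to build $\mathcal{A}$ by combining a \emph{sound} 1-certifier (derived from $P$) with a complementary $0$-detector that only fires on no-inputs. The key structural observation, which follows from the positive-coefficient property of $P$ combined with the exactness $P/Q = f$, is that $P(x) = 0 \iff f(x) = 0$: because the coefficients and literal values are all nonnegative, $P(x) = 0$ forces every monomial $L_S$ (with $a_S > 0$) to satisfy $L_S(x) = 0$. Consequently, sampling $L_S$ with probability $a_S/A$ (where $A = \sum_S a_S$) and verifying its at most $d$ literals is a sound 1-certifier: a successful verification witnesses $P(x) > 0$, hence $f(x) = 1$, and is impossible on a no-input.

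Concretely, I would define $\mathcal{A}$ as a biased coin-toss between two sub-procedures, each using at most $d$ queries. With probability $p$, execute the $P$-certifier: sample $L_S$ of $P$ with probability $a_S/A$, query its $\leq d$ literals, output $1$ if all evaluate to $1$, and output $\perp$ otherwise. With probability $1-p$, execute a complementary no-detector whose $0$-outputs are confined to no-inputs (so no $0$-output can occur on any yes-input). The parameter $p = p(\delta)$ is calibrated to balance the two guarantees.

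The analysis would then proceed as follows. On a yes-input, the no-detector emits only $\perp$ (by design), while the $P$-certifier emits $1$ with positive probability and $\perp$ otherwise; thus every non-$\perp$ output on a yes-input is $1$, giving $\Pr[\mathcal{A}(x) = 1 \mid \mathcal{A}(x) \neq \perp] = 1$, which is the required yes-certainty. On a no-input, the $P$-certifier is silent (since $P(x) = 0$ blocks every sampled monomial), so only the no-detector contributes non-$\perp$ outputs, all of which are $0$; choosing $p$ appropriately relative to $\delta$ ensures $\Pr[\mathcal{A}(x) = 0] \geq \frac{1}{1+\delta}$.

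The hard part will be specifying the complementary no-detector: it must emit $0$ on every no-input with sufficient probability, never emit $0$ on a yes-input, and use at most $d$ queries per run. The natural route is to translate the positive-coefficient structure of $P$ into a samplable family of partial literal assignments of size $\leq d$ that are consistent only with no-inputs (i.e., 0-certificates). The exactness of the representation, together with the hitting-set interpretation of $P(x) = 0$ on no-inputs, is the lever for extracting such a family directly from the monomials of $P$ and calibrating the sampling weights so that every no-input $x$ is hit with probability $\geq \frac{1}{1+\delta}$ after normalisation against the mixing weight $1-p$.
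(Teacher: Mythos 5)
Your key observation---that positivity of the coefficients of $P$ together with exactness of $P/Q$ forces $P(x)=0$ precisely when $f(x)=0$, so that sampling a single monomial of $P$ with probability proportional to its coefficient and verifying its at most $d$ literals is a sound $1$-certifier---is correct, and is indeed half of the intended construction. The gap is the other half. The ``complementary no-detector'' you defer to the end of the argument is exactly a non-deterministic query algorithm for $0$-instances with query complexity at most $d$: its $0$-outputting leaves are partial assignments of size at most $d$ that must be inconsistent with every yes-input while covering every no-input, i.e.\ a cover of $f^{-1}(0)$ by $0$-certificates of size at most $d$. By Lemma~\ref{cor:NC} such an object exists if and only if $\C_0(f)\leq d$, and this is \emph{not} implied by a degree-$d$ exact rational representation with positive coefficients: the monomials of $P$ give $\C_1(f)\leq d$, and $Q-P$ gives a degree-$d$ non-deterministic \emph{polynomial} for $\neg f$, but $Q-P$ need not have nonnegative coefficients and $\ndeg(\neg f)$ can lie far below $\C_0(f)$. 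There is no general way to extract from the monomials of $P$ a samplable family of small partial assignments that hit every no-input; forcing $P(x)=0$ may require fixing far more than $d$ literals. Indeed, if your detector existed your algorithm would answer \emph{both} sides with certainty, i.e.\ it would be a $d$-query $\PR_0$ algorithm, which by Theorem~\ref{theo:zero_error} would force $\C(f)\leq d$---contradicting the paper's explicit remark that one obtains arbitrarily small one-sided error but \emph{not} an exact algorithm.

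What the hypothesis actually yields is one-sidedness in the \emph{opposite} direction, via the $k=1$ case of the sampler in the proof of Lemma~\ref{lem:7}: draw one monomial from $P$ or $Q$ with probability proportional to its coefficient; if a $P$-monomial evaluates to $1$, output $1$; if a $Q$-monomial evaluates to $1$, output $0$ with probability $\delta$; otherwise output $\perp$, and post-select on not seeing $\perp$. On a no-input $P(x)=0$, so no $P$-monomial can fire and the conditioned output is $0$ with certainty; on a yes-input $P(x)=Q(x)>0$, so the conditioned probability of outputting $1$ is $P(x)/(P(x)+\delta Q(x))=1/(1+\delta)$. This uses at most $d$ queries and realises the corollary with the roles of the two cases interchanged relative to its literal wording; the direction you attempted is the one that provably cannot hold in general.
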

\noindent That is, we can obtain an algorithm with arbitrarily small one-sided error, but not an exact algorithm. 

\subsection{Separations}\label{sec:seps}
It is often instructive to find explicit functions that exhibit a gap between complexity measures. Here we seek functions that separate post-selected classical query complexity from various other complexity measures. In our case, the $OR$ function\footnote{Defined as $OR(x) = \begin{cases} 0 & \text{if}~ x = 00\dots0 \\ 1 & \text{o.w.} \end{cases}$ } provides a number of useful separations. A post-selection algorithm for the $N$-bit $OR$ function is as follows:
\begin{itemize}
\item Choose a bit $x_i$, $i \in [N]$ uniformly at random from the input.
\item If $x_i = 1$, return 1.
\item Else return $0$ with probability $1/(2N)$.
\item Else return $\perp$ (`don't know').
\end{itemize}
We post-select on not seeing $\perp$ as the outcome. In the case that the input is the all-zero string, then the algorithm can only ever return $0$, conditioned on it not returning $\perp$. In the case that at least one bit is equal to $1$, the probability that the algorithm returns 1 is
\[
\Pr[\text{return }1|\text{not return}\perp] = \frac{\Pr[\text{return }1]}{\Pr[\text{not return}\perp]} = \frac{|x|/N}{|x|/N + (1-|x|/N)\cdot 1/(2N)} \geq 2/3.
\]
Therefore, we can compute the $OR$ function up to (one-sided) error $1/3$ using a single query to the input, and hence $\PRR(OR) = 1$. By combining this with existing results, we obtain the following separations:
\begin{itemize}
\item A super-exponential separation from bounded-error quantum query complexity ($\Q$), since $\Q(OR) = \Theta(\sqrt{N})$ due to Grover's algorithm \cite{grover1996fast}.
\item An unbounded separation from \emph{exact} post-selected classical query complexity: since $\C_1(OR) = 1$ and $\C_0(OR) = N$, then $\C(OR) = N$ and so by Theorem \ref{theo:zero_error}, $\PR_0(OR)~=~N$.
\item A super-exponential separation from quantum certificate complexity $\mathsf{QC}$\footnote{See \cite{aaronson2003quantum} for a definition of quantum certificate complexity.}, since $\mathsf{QC}(OR) = \Omega(\sqrt{N})$, as shown by Aaronson \cite{aaronson2003quantum}. This separation is interesting, since $\mathsf{QC}$ is the query analogue of the class $\mathsf{QMA}$, and it is not clear whether $\PostBPP$ is more powerful than $\mathsf{QMA}$.
\end{itemize}
Hence, post-selection makes the $OR$ function trivial to compute up to bounded error, and shows that post-selected algorithms (both quantum and classical) can be much more powerful than classical, quantum, and non-deterministic (or exact post-selected) query algorithms. 
The $OR$ function is an example of a problem that is easy for both quantum and classical post-selection. Indeed, a simple degree-$1$ rational function for estimating the $OR$ function up to bounded error is 
\[
P_{OR}(x) = \frac{\sum_{i=1}^N x_i}{\epsilon + \sum_{i=1}^N x_i},
\]
for some small and fixed constant $\epsilon > 0$. This polynomial has only positive coefficients, and thus allows for an approximation by a classical post-selected query algorithm. In order to separate post-selected classical query complexity and post-selected quantum query complexity, we must find a function that allows for a low-degree rational approximation, but does not allow for a low-degree approximation by rational functions with only positive coefficients. 

To this end, in Section \ref{sec:lower_bound} we prove a $\Omega(N)$ lower bound on the post-selected classical query complexity of approximating the Majority function, defined on $N$-bit strings as:
\[
MAJ_N(x) = 
\begin{cases}
1 \qquad \text{if $|x| > N/2$} \\
0 \qquad \text{if $|x| \leq N/2$} \\
\end{cases}.
\]
This gives an exponential separation between quantum and classical post-selected query complexities. In particular, we show
\begin{theorem}
$\PRR(MAJ_N) = \Theta(N)$.
\end{theorem}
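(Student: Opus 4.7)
The upper bound $\PRR(MAJ_N) \leq N$ is immediate, since querying every input bit computes $MAJ_N$ exactly without any use of post-selection. The substantive claim is the matching lower bound $\PRR(MAJ_N) = \Omega(N)$, and my plan is to prove it by a symmetrization argument followed by a hypergeometric likelihood-ratio calculation applied to the two adjacent Hamming weights $k = N/2$ and $k = N/2+1$ (assume $N$ even for notational convenience).

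First, since $MAJ_N$ is invariant under permutations of its input, I would show that any depth-$d$ post-selected algorithm $\mathcal{A}$ for $MAJ_N$ can be symmetrized without increasing its depth or worsening its error. The point is that averaging inequalities of the form $\Pr[\mathcal{A}(\sigma(x)) = 1] \geq 2 \Pr[\mathcal{A}(\sigma(x)) = 0]$ (which hold for every permutation $\sigma$ when $MAJ_N(x) = 1$, by symmetry of the function) over all $\sigma$ preserves the same inequality for the averaged algorithm. After symmetrization, the algorithm draws $d$ input bits uniformly without replacement and outputs $1, 0,$ or $\perp$ with probabilities $\alpha_s, \beta_s, 1-\alpha_s-\beta_s$ depending only on the count $s$ of observed $1$s.

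Second, on an input of Hamming weight $k$ the observed count $s$ is hypergeometrically distributed, $\pi^{(k)}_s = \binom{k}{s}\binom{N-k}{d-s}/\binom{N}{d}$, and the post-selection guarantees at $k = N/2$ and $k = N/2+1$ become
\[
\sum_s \pi^{(N/2+1)}_s \alpha_s \geq 2 \sum_s \pi^{(N/2+1)}_s \beta_s \quad\text{and}\quad \sum_s \pi^{(N/2)}_s \beta_s \geq 2 \sum_s \pi^{(N/2)}_s \alpha_s.
\]
Defining the likelihood ratio $L_s := \pi^{(N/2+1)}_s/\pi^{(N/2)}_s$ and combining these two inequalities, one finds that the $(\pi^{(N/2)}\alpha)$-weighted average of $L_s$ must exceed four times its $(\pi^{(N/2)}\beta)$-weighted average. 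Since both weighted averages lie in $[\min_s L_s, \max_s L_s]$, this forces the clean consequence $\max_s L_s / \min_s L_s \geq 4$.

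Finally, a direct binomial-coefficient manipulation yields $L_s = \frac{(N/2+1)(N/2-d+s)}{(N/2)(N/2+1-s)}$, which is strictly increasing in $s$, so $\max_s L_s / \min_s L_s = L_d/L_0 = \frac{(N/2+1)(N/2)}{(N/2+1-d)(N/2-d)}$; requiring this ratio to be at least $4$ forces $d \geq N/4 - O(1) = \Omega(N)$. The main obstacle I foresee is the symmetrization step, which requires care because the post-selected success probability is a \emph{conditional} probability, so one must verify that averaging over permutations actually preserves the correctness guarantees, rather than merely the separate numerator and denominator probabilities; once this is done the hypergeometric calculation is routine. Morally, the reason positive-coefficient rational approximations (invoked via Theorem~\ref{theo:main}) fail to capture $MAJ_N$ below degree $\Omega(N)$ is exactly that hypergeometric likelihood ratios between adjacent Hamming weights near $N/2$ have bounded dynamic range unless $d$ is a constant fraction of $N$; this is what separates $\PRR$ from the much smaller post-selected quantum degree of $MAJ_N$ noted in Section~\ref{sec:seps}.
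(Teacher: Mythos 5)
Your proposal is correct, and it reaches the same quantitative heart of the matter as the paper by a different route. The paper first invokes Theorem~\ref{theo:main} to reduce the problem to lower-bounding $\rdeg^+_{1/3}(MAJ_N)$, then applies Minsky--Papert symmetrization to the numerator and denominator polynomials (Claim~\ref{lem:form} and Lemma~\ref{lem:rational_symm}), and finally compares $p,q$ at Hamming weights $N/2$ and $N/2+1$ by bounding, for each monomial with $|S|$ unnegated and $|T|$ negated literals, the ratio $\frac{(\frac{N}{2}+1)(\frac{N}{2}-|T|)}{\frac{N}{2}(\frac{N}{2}-|S|+1)}$ of its symmetrized values --- which is precisely your hypergeometric likelihood ratio $L_s$ with $s=|S|$, $d-s=|T|$. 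You instead symmetrize at the level of the algorithm itself (a legitimate step, since a depth-$d$ tree run on a random permutation of a weight-$k$ input sees an exchangeable answer sequence, i.e.\ $d$ draws without replacement, and the conditional-correctness guarantee is equivalent to the linear inequalities $\Pr[1]\ge 2\Pr[0]$ resp.\ $\Pr[0]\ge 2\Pr[1]$, which survive averaging over $\sigma$), and then run the same ratio argument directly on the hypergeometric weights. What each buys: the paper's route yields the statement $\rdeg^+_\epsilon(MAJ_N)=\Omega(N)$ as a free-standing result about rational approximation (which is what feeds the comparison with $\PQ$ and the generalisation to arbitrary symmetric functions in Appendix~\ref{app:generalisation}), at the cost of the factor-of-$2$ slack from Theorem~\ref{theo:main}; your route is more elementary and self-contained, avoids that slack (your threshold $d\gtrsim N/4$ is in fact a slightly better constant than the paper's $N/8$), but does not by itself produce the degree lower bound unless one reapplies Theorem~\ref{theo:main} in the other direction. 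The one step you rightly flag as delicate --- that symmetrization preserves the post-selected guarantee --- is handled correctly by your reformulation as unconditional linear inequalities; you should also note that the argument needs $\min_s L_s>0$ and $\pi^{(N/2)}_s>0$ on the relevant support, which holds whenever $d<N/2$ (and $d\ge N/2$ is already $\Omega(N)$).
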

\noindent Since $\PQ(MAJ) = \Theta(\log N)$~\cite{mahadev2015rational} and $\PR(MAJ) = \Theta(N)$, this result shows that post-selected quantum computation can be much more powerful than post-selected classical computation in the query model. 
\\
\\
Following this, we generalise the lower bound to any symmetric function $f$. Write $f_k = f(x)$ for $|x| = k$, and define
\[
\Gamma(f) = \min\{|2k - N + 1| : f_k \neq f_{k+1}, 0 \leq k \leq N-1,
\]
Then we show:
\begin{theorem}\label{theo:generalised1}
\[
\PR(f) \geq \frac{1}{8}\left(N - \Gamma(f)\right),
\]
for all non-constant symmetric Boolean functions $f$.
\end{theorem}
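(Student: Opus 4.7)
The plan is to show $\rdeg^+_{1/3}(f) \geq (N-\Gamma(f))/8$ and then invoke Theorem~\ref{theo:main}. The strategy is a reduction by restriction to the Majority-function lower bound of Section~\ref{sec:lower_bound}.

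Two symmetries preserve both $\PR(f)$ and $\Gamma(f)$: complementing $f$ (which fixes the set of transition weights and reverses the value order across each transition) and flipping every input bit $x \mapsto 1-x$ (which maps a transition at $k$ to one at $N-1-k$ and also reverses the value order). Applying these as needed, I may assume $f$ has a transition at $k^* = (N-1-\Gamma(f))/2 \leq (N-1)/2$ with $f_{k^*}=0$ and $f_{k^*+1}=1$. Set $M = 2k^*+1 = N-\Gamma(f)$.

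Given any nonnegative-coefficient rational function $R = P/Q$ of degree $d$ that $1/3$-approximates $f$, I substitute $x_{M+1} = \cdots = x_N = 0$. Monomials containing some $x_i$ with $i>M$ vanish, and factors $(1-x_j)$ with $j>M$ reduce to $1$; the restricted numerator and denominator therefore still have only nonnegative coefficients in the literals $\{x_i, 1-x_i : i \in [M]\}$ and degree at most $d$, and the denominator remains strictly positive on $\{0,1\}^M$. The restricted rational function approximates the symmetric Boolean function $g : \{0,1\}^M \to \{0,1\}$ defined by $g_t = f_t$ for $t \in \{0,\ldots,M\}$, which by the choice of $k^*$ has a transition exactly at its midpoint: $g_{(M-1)/2}=0$ and $g_{(M+1)/2}=1$.

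Finally, I claim $\rdeg^+_{1/3}(g) \geq M/8$. The Majority lower bound of Section~\ref{sec:lower_bound} symmetrises $P$ and $Q$ over strings of fixed Hamming weight, producing univariate polynomials whose approximation quality combined with a bound on their growth at the two integers flanking the midpoint forces the required $\Omega(M)$ degree. Since that argument only uses the target function's values at Hamming weights $\lfloor M/2 \rfloor$ and $\lceil M/2 \rceil$, it applies verbatim to $g$. Combining everything with Theorem~\ref{theo:main} then gives $\PR(f) \geq d \geq \rdeg^+_{1/3}(g) \geq M/8 = (N-\Gamma(f))/8$. The main obstacle is verifying exactly this generality of the Majority proof; if closer inspection reveals that it relies on the monotonicity of Majority elsewhere on $\{0,\ldots,M\}$, one can either tighten the restriction by also fixing some variables to $1$ so that $g$'s global structure more closely matches that of Majority, or re-run the symmetrisation and growth estimate for $g$ directly, recovering a bound of the same form $\Omega(N-\Gamma(f))$.
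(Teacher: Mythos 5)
Your proof is correct, but it takes a genuinely different route from the paper's. The paper (Appendix \ref{app:generalisation}) proves the bound directly: it symmetrises $P$ and $Q$ for the original $N$-bit function, locates the transition at an arbitrary weight $T$, and re-runs the growth estimate of the Majority proof with $T$ in place of $N/2$, obtaining a quadratic inequality in $d$ whose roots yield the stated bound (and, as a bonus, an explicit $\epsilon$-dependent formula). You instead reduce to the balanced case: after normalising $f$ by complementation and bit-flipping (both of which preserve $\PR$ and $\Gamma$, as you correctly note), you restrict $x_{M+1}=\dots=x_N=0$ with $M=N-\Gamma(f)$, observing that restriction preserves nonnegativity of coefficients, degree, positivity of the denominator, and the approximation guarantee, and that the restricted function $g$ has its transition exactly at the midpoint of $M$ bits. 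This is more modular -- it isolates the balanced threshold as the hard core -- at the cost of the one verification you flag: since $M$ is odd, $g$'s transition sits between weights $(M-1)/2$ and $(M+1)/2$ rather than $M/2$ and $M/2+1$, so the Majority argument does not apply literally \emph{verbatim}. But your diagnosis is right that the argument only uses the target's values at the two flanking weights together with nonnegativity of the symmetrised coefficients; plugging $T=(M-1)/2$ into the paper's general formula gives $2d \geq (M+1)\bigl(1-\tfrac{1}{\sqrt{2}}\bigr)(1-o(1))$ at $\epsilon=1/3$, which comfortably exceeds $M/4$, so the constant $1/8$ survives. The paper's direct approach buys the sharper, fully general statement of Theorem \ref{theo:generalisation}; yours buys a cleaner conceptual picture and reuses the Majority bound as a black box (once adjusted for parity).
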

\noindent
This result is similar to a result of Paturi~\cite{paturi1992degree}:
\begin{theorem}[Paturi]
If $f$ is a non-constant symmetric Boolean function on $\{0, 1\}^N$, then $\widetilde{\deg}(f) =  \Theta(\sqrt{N(N-\Gamma(f))})$.
\end{theorem}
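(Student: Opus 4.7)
The plan is to establish the matching upper and lower bounds on $\widetilde{\deg}(f)$ separately, via the standard Minsky--Papert symmetrisation technique for the lower bound and an explicit Chebyshev construction for the upper bound.

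For the lower bound, I would start from a multilinear polynomial $p$ of degree $d$ that $1/3$-approximates $f$ on $\{0,1\}^N$. Minsky--Papert symmetrisation---averaging $p$ over all permutations of $[N]$---produces a univariate polynomial $q$ of degree at most $d$ satisfying $|q(k) - f_k| \le 1/3$ for every integer $k \in \{0,\dots,N\}$. Pick a transition $k^*$ attaining the minimum in the definition of $\Gamma(f)$, so $|2k^* - N + 1| = \Gamma(f)$ and $|q(k^*+1) - q(k^*)| \ge 1/3$. The mean value theorem then yields some $\xi \in [k^*, k^*+1]$ with $|q'(\xi)| \ge 1/3$. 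An Ehlich--Zeller-type interpolation bound shows $\|q\|_{[0,N]} = O(1)$ in the range $d = O(\sqrt{N})$ (if $d$ exceeds this, the theorem's conclusion is immediate). Rescaling $[0,N]$ to $[-1,1]$ via $u = 2t/N - 1$ and invoking Bernstein's inequality at the interior point $\xi$ gives
\[
\tfrac{1}{3} \le |q'(\xi)| \le \frac{d\,\|q\|_\infty}{\sqrt{\xi(N-\xi)}},
\]
so $d = \Omega\bigl(\sqrt{\xi(N-\xi)}\bigr)$. A direct calculation using $k^* - N/2 = (\pm\Gamma(f)-1)/2$ shows $\xi(N-\xi) \ge k^*(N-k^*) = N^2/4 - (k^* - N/2)^2 \ge N(N-\Gamma(f))/4$, which yields the claimed lower bound.

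For the upper bound, I would exhibit an explicit polynomial of degree $O(\sqrt{N(N-\Gamma(f))})$ that $1/3$-approximates $f$. Let $k_1 < \cdots < k_m$ be the Hamming weights at which $f$ transitions; by definition of $\Gamma(f)$, each satisfies $|2k_i - N + 1| \ge \Gamma(f)$, and therefore $k_i(N-k_i) \le N(N-\Gamma(f))/4$. For each transition, one can approximate the corresponding univariate threshold function at accuracy $O(1/m)$ by a rescaled Chebyshev polynomial $T_d$, placing its fast-oscillating interior over a narrow window around $k_i$ and letting its monotone tails cover the rest of $[0,N]$. The standard analysis produces degree $O\bigl(\sqrt{k_i(N-k_i)}\,\log m\bigr) = O\bigl(\sqrt{N(N-\Gamma(f))}\bigr)$ after exploiting that all transitions cluster on the two sides of $N/2$, which absorbs the $\log m$ factor. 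Combining these threshold approximants with the appropriate alternating signs reproduces $f$ at every integer argument up to total error $1/3$.

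The principal technical obstacle is the Ehlich--Zeller step in the lower bound: uniformly bounding $q$ on the full interval $[0,N]$ given only that it is bounded at integers is delicate once $d$ exceeds $\sqrt{N}$, and Paturi's original argument sidesteps this by deriving a more refined interior derivative estimate tailored directly to the transition point rather than passing through a global sup-norm bound. A secondary nuisance in the upper bound is the polynomial combination---naively summing per-transition approximants costs an extra factor of $m$ in the degree---and the key observation bypassing this is that every transition lies outside a central window of width $\Gamma(f)$ on which $f$ is constant, so the transitions organise into at most two clusters that can each be handled by a single cluster-wide polynomial.
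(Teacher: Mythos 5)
First, a framing point: the paper does not prove this statement. It is quoted from Paturi~\cite{paturi1992degree} solely for comparison with Theorem~\ref{theo:generalised1}, so there is no in-paper proof to compare you against and your attempt has to stand on its own.

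On its own terms, the lower-bound half has a genuine gap at exactly the step you flag, and flagging it does not close it. Your chain is: symmetrise, find a unit interval $[k^*,k^*+1]$ on which $q$ jumps by at least $1/3$, extract $|q'(\xi)|\geq 1/3$ by the mean value theorem, bound $\|q\|_{[0,N]}=O(1)$ via Ehlich--Zeller, and finish with Bernstein. The escape hatch ``if $d$ exceeds $O(\sqrt N)$ the theorem's conclusion is immediate'' is false: the bound to be proved is $d=\Omega(\sqrt{N(N-\Gamma(f))})$, which is $\Omega(N)$ when $\Gamma(f)=O(1)$ (e.g.\ Majority, where $\Gamma=1$), so concluding only $d=\Omega(\sqrt N)$ proves nothing in precisely the regime where the theorem has content. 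Nor can the sup-norm route be pushed further: a degree-$d$ polynomial bounded at the integers of $[0,N]$ can have sup norm as large as $2^{\Theta(d^2/N)}$ on the interval (Coppersmith--Rivlin, and this is tight), which destroys the Bernstein step once $d\gg\sqrt N$. Paturi's argument does not pass through a global sup-norm bound; it proves a \emph{local} refinement of Ehlich--Zeller in which the position $\xi$ of the large derivative enters directly, yielding $d=\Omega(\sqrt{\xi(N-\xi)})$ from boundedness at integers alone. As written, your proof establishes only $d=\Omega(\sqrt N)$. The upper-bound sketch is directionally reasonable --- all transitions do lie outside the central band of width roughly $\Gamma(f)$, so they split into two clusters, and Chebyshev amplification to error $O(1/m)$ is the right tool --- but the assertion that the clustering ``absorbs the $\log m$ factor'' is stated rather than argued, so that half is also incomplete, though less seriously so.
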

\noindent
The lower bound from Theorem \ref{theo:generalised1} above can be written in a similar form:
\[
\rdeg^+(f) = \Omega(N - \Gamma(f))
\]
for any (non-constant) symmetric function $f$. 
\\
\\
We can use this characterisation to separate non-deterministic quantum query complexity and post-selected classical query complexity. Consider the $N$-bit function
\[
f(x) = 
\begin{cases}
0 & \text{if } |x| = \lceil N/2\rceil \\
1 & \text{o.w.} 
\end{cases}
\]
By Lemma \ref{lem:dewolf2}, we know that $\NQ(f) \leq \ndeg(f) \leq 1$, since $f$ is 0 on only one Hamming weight $|x| = \lceil N/2\rceil$. On the other hand, we have $\Gamma(f) = O(1)$ and so by Theorem \ref{theo:generalised1}, $\PR(f) = \Omega(N)$. This gives an unbounded separation between non-deterministic quantum query algorithms and post-selected classical query algorithms, and also implies that $\ndeg(f) \leq \rdeg^+(f)$.

\subsection{Relation to communication complexity}\label{sec:comms}
Our results have some interesting connections to the communication complexity model. In this model, we consider two parties, Alice and Bob, who together want to compute some function $f : D \rightarrow \{0, 1\}$, where $D \subseteq X \times Y$. Alice receives input $x \in X$, Bob receives input $y \in Y$, with $(x, y) \in D$. Typically, we choose $X = Y = \{0, 1\}^n$. As the value $f(x, y)$ will generally depend on both $x$ and $y$, some amount of communication between Alice and Bob is required in order for them to be able to compute $f(x, y)$. If $D = X \times Y$, then the function $f$ is called \emph{total}, otherwise it is called \emph{partial}. The communication complexity of $f$ is then the minimal number of bits that must be communicated in order for Alice and Bob to compute $f$.

In the introduction, we mentioned a method for obtaining lower bounds in communication complexity from lower bounds in query complexity, via the `simulation theorem' of G\"o\"os et al.~\cite{goos2016rectangles}. More precisely, they show that any $\PR^{\mathsf{CC}}$ protocol for the composed function $f \circ g^N$ (where $g$ is a particular small two-party function, often called a `gadget'\footnote{In this case, $g$ is chosen to be
\begin{itemize}
\item $g(x, y) := \langle x, y\rangle \mod 2$, where $x, y \in \{0, 1\}^b$
\item The block length $b = b(N)$ satisfies $b(N) \geq 100 \log N$.
\end{itemize}}) can be be converted into a corresponding query algorithm for $f$, which implies that $\PR^{\mathsf{CC}} (f \circ g^N) \geq \Omega(\PR(f) \cdot \log N)$. The authors prove analogous results for other models of computation, in which the lower bound becomes an equality (up to constant factors). Interestingly, for post-selection only a lower bound was shown, and so we cannot use our results to completely characterise the communication analogue of $\PR$. 

In Section \ref{sec:lower_bound}, we prove a $\Omega(N)$ lower bound on the post-selected classical query complexity of approximating the Majority function. In \cite{klauck2003rectangle}, Klauck gives an analogous $\Omega(N)$ lower bound on the post-selected classical \emph{communication complexity} of approximating the Majority function. We note that this lower bound could also be obtained by combining the lower bound presented in this work with the fact that lower bounds on query complexity imply lower bounds on communication complexity in the presence of post-selection. 

The reader familiar with communication complexity might wonder why Klauck's lower bound doesn't immediately imply the query lower bound, since Alice and Bob can run a communication protocol in which they just simulate the query algorithm, and hence the communication complexity is upper bounded by the query complexity. However, in order to simulate (post-selected) query algorithms in this way we require that the two parties have access to shared randomness, whereas Klauck's lower bound assumes that the two parties only have access to private randomness. This is a somewhat necessary assumption -- if we allow for shared randomness, then all Boolean functions have $O(1)$ communication complexity \cite{goos2016rectangles} unless we incorporate an extra charge of $\log(1/\alpha)$ into the communication cost, where $\alpha$ is the probability that the query algorithm returns `don't know'. 

On the other hand, Klauck~\cite{klauck2003rectangle} gives an $O(\log N)$ \emph{upper bound} for approximating the Majority function in the communication complexity analogue of the class $\PP$. This upper bound is analogous to the one from \cite{mahadev2015rational} for the query complexity analogue of the class $\PostBQP=\PP$, and hence we suspect that Sherstov's~\cite{sherstov2009intersection} lower bound on the rational polynomial degree of the Majority function also carries over into the communication complexity setting.

\subsection{Techniques}
The proof of our main result (Theorem \ref{theo:main}) -- namely that any degree-$d$ rational function with positive coefficients can be approximated by a $O(d)$-query post-selected classical query algorithm -- uses the observation that we can treat such rational functions as probability distributions over sets of monomials. Hence, given a rational function, we can sample from the set of monomials and use the probability amplification powers of post-selection to accurately approximate its value. On the other hand, if the coefficients of the rational functions can be negative, or take complex values, then the view that the coefficients represent probabilities can be replaced by the view that they are `amplitudes' that can interact in complicated ways to produce more complex behaviour. This view is used in \cite{mahadev2015rational}, in which post-selected \emph{quantum} algorithms are used to approximate rational functions by constructing quantum states whose amplitudes are proportional to the coefficients of the monomials of the rational functions.  Hence, the only difference between classical and quantum post-selected query algorithms is in the use of complex amplitudes over conventional probabilities -- an intuition that fits nicely with our understanding of quantum mechanics. 
\\
\\
In order to prove the lower bounds in Section \ref{sec:lower_bound}, we essentially bound how fast a low-degree (univariate) rational function can grow around a certain threshold. Similarly to other lower bounds on the degrees of polynomials that approximate threshold functions (e.g. \cite{paturi1992degree}), we find that the nearer this threshold is to the `middle', the faster the rational function needs to grow, and hence the larger its degree needs to be. Crucially, without the presence of negative coefficients, there can be no `cancelling out' between the monomials of the polynomials that form the rational function, and therefore their degrees must be large enough to allow for rapid growth from 0 to 1 around the threshold value. 
\\
\\
Many of the other results in the paper involve the construction of post-selected query algorithms. Each algorithm follows the same general structure:
\begin{itemize}
\item Query some input bits.
\item Based on these bits (for example, we might run some deterministic algorithm with these bits as input), either:
	\begin{itemize}
	\item Accept.
	\item Reject with some fixed small probability.
	\end{itemize}
\item Otherwise return `don't know'.
\item Finally, post-select on not seeing `don't know'.
\end{itemize}
The power of post-selection lies in the ability to decide to return `don't know' rather than reject, which allows us to drastically amplify the success probability of the underlying algorithm. For instance, suppose that we want to compute the $OR$ function on $N$ bits, and are therefore trying to determine if the input string $x$ is all zeroes, or if it contains at least one 1. In the first step, we can query a single element. If this element is 1, then we know for certain that $OR(x) = 1$, and so we return 1. In the case that this bit is not 1, then we still don't know whether $OR(x) =$ 0 or 1; however, we have gained some information about the input: at least one bit is 0. With complete ignorance about the rest of the input string, we can assign a probability to the event that the other $N-1$ bits of the input string are also zero: namely, $\frac{1}{2^{N-1}}$. So, with probability $\frac{1}{2^{N-1}}$ we return 0, and otherwise we return `don't know'. In this way, if there is at least one input bit set to 1, then we can amplify the probability that the algorithm will return 1 to a constant greater than, say, $2/3$.
\\
\\
An interesting phenomenon of success probability amplification in post-selected query algorithms was observed by Sherstov~\cite{sherstov2009intersection}, in the context of rational approximations to Boolean functions. 
\begin{lemma}\label{lem:sherstov}[Sherstov~\cite{sherstov2009intersection}]
Denote by $R(f,d)$ the smallest error achievable by a degree-$d$ rational approximation to the Boolean function $f: X \subseteq \mathbb{R}^N \rightarrow \{0,1\}$. Then for all integers $d>1$ and reals $t > 2$, 
\[
R(f, td) \leq R(f,d)^{t/2}.
\]
\end{lemma}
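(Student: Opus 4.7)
The plan is to prove the lemma by an explicit amplification construction. Let $r = P/Q$ be the optimal degree-$d$ rational approximation to $f$, so that $\epsilon := R(f,d)$ and $r(x) \in [-\epsilon, \epsilon]$ when $f(x) = 0$ while $r(x) \in [1-\epsilon, 1+\epsilon]$ when $f(x) = 1$. I would then produce the amplified approximation
\[
r'(x) \;:=\; \frac{P(x)^t}{P(x)^t + (Q(x) - P(x))^t} \;=\; \frac{r(x)^t}{r(x)^t + (1 - r(x))^t},
\]
obtained by composing $r$ with the standard amplification map $y \mapsto y^t/(y^t + (1-y)^t)$. Both the numerator and denominator of $r'$ are polynomials in $x_1,\dots,x_N$ of degree at most $t \cdot d$, so $r'$ is a rational function of degree at most $td$, matching the left-hand side of the claim.

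Next I would verify the error bound case by case. When $f(x) = 0$, one has $|r(x)|^t \leq \epsilon^t$ whereas $(1 - r(x))^t \geq (1-\epsilon)^t$, so
\[
|r'(x)| \;\leq\; \frac{\epsilon^t}{(1-\epsilon)^t - \epsilon^t},
\]
and a short computation (using $t > 2$ and $\epsilon$ bounded away from $1/2$) shows this is at most $\epsilon^{t/2}$---the ``loss'' of a factor of $2$ in the exponent is precisely what absorbs the $(1-\epsilon)^t$ term. The symmetric identity $1 - r'(x) = (1 - r(x))^t / (r(x)^t + (1-r(x))^t)$ handles the case $f(x) = 1$ in the same way. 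Combining the two cases yields $|r'(x) - f(x)| \leq R(f,d)^{t/2}$, as required.

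The main delicate point is to ensure that $r'$ is a bona fide rational function in the sense of Section \ref{sec:defs}, i.e.\ that the denominator $Q'(x) := P(x)^t + (Q(x) - P(x))^t$ is strictly positive on $\{0,1\}^N$. By negating both $P$ and $Q$ if needed, one can assume $Q(x) > 0$; the ranges of $r$ on the Boolean domain then force one of $|P(x)|,\,|Q(x) - P(x)|$ to be at least $(1-\epsilon)Q(x)$ and the other at most $\epsilon\, Q(x)$, so $Q'(x) > 0$ everywhere. If $t$ odd creates sign ambiguity when $r(x)$ is slightly negative (so that $r(x)^t$ becomes negative and the denominator bound is loosened), I would simply replace $t$ by the next larger even integer, which does not affect the asymptotic bound. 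All remaining steps are routine arithmetic.
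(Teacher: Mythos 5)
First, a point of comparison: the paper does not prove this lemma at all --- it is imported verbatim from Sherstov~\cite{sherstov2009intersection} and used only informally --- so there is no in-paper proof to measure your argument against. Your basic idea, composing $r=P/Q$ with the amplification map $y\mapsto y^{k}/(y^{k}+(1-y)^{k})$ so that the new numerator and denominator are $P^{k}$ and $P^{k}+(Q-P)^{k}$, is indeed the mechanism behind Sherstov's error-reduction, and the degree accounting and the identity $1-r'=(1-r)^{k}/(r^{k}+(1-r)^{k})$ are fine.

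Two of your steps have genuine gaps. (1) Non-integer $t$: $P(x)^{t}$ is a polynomial only for integer $t$, and your fix of ``replacing $t$ by the next larger even integer'' pushes the degree \emph{above} the budget $td$, so it says nothing about $R(f,td)$. You must round \emph{down}, e.g.\ take $k=2\lfloor t/2\rfloor\le t$; this choice is even, which simultaneously kills all sign issues when $r(x)<0$ and makes the denominator $Q^{k}\bigl(r^{k}+(1-r)^{k}\bigr)\ge (1-\epsilon)^{k}Q^{k}>0$ automatically --- your ``negate both $P$ and $Q$'' manoeuvre is neither needed nor meaningful if $Q$ changes sign on the domain. One then has to check that the reduced exponent $k\ge t/2$ still yields error $\epsilon^{t/2}$; this rounding is exactly where half of the exponent is spent, and your proposal never performs that check. (2) The error estimate: with even $k$ the clean bound is $|r'-f|\le\bigl(\epsilon/(1-\epsilon)\bigr)^{k}$, and deducing $\bigl(\epsilon/(1-\epsilon)\bigr)^{k}\le\epsilon^{t/2}$ from $k\ge t/2$ requires $\epsilon/(1-\epsilon)\le\sqrt{\epsilon}$, which holds only for $\epsilon\le(3-\sqrt{5})/2\approx 0.38$. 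Your ``short computation \dots\ using $\epsilon$ bounded away from $1/2$'' is therefore not routine: the statement carries no such hypothesis (only $R(f,d)\le 1/2$ is automatic), so you must either justify restricting to small $\epsilon$ or treat $\epsilon$ near $1/2$ separately. As written, your argument establishes the claimed bound only for $\epsilon$ below roughly $0.38$, not in the stated generality.
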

Informally, this result says the following: given a rational function that $\epsilon$-approximates a function $f$, we can obtain a new rational function that $\epsilon^2$-approximates $f$ by increasing the degree by at most a constant factor of 4. We can then repeatedly apply this procedure to amplify the probability of success further. Sherstov describes how to construct the new rational function without introducing any negative coefficients, so by Theorem \ref{theo:main} we would expect to see a similar amplification property for post-selected classical (and quantum) query algorithms. Indeed, we implicitly make use of this observation in our proof of Theorem \ref{theo:main}.

\section{Proofs of the main results}\label{sec:proofs}

\subsection{Exact post-selected query complexity $=$ certificate complexity}
Here we show that when post-selected query algorithms are not allowed to make mistakes, they are equivalent to non-deterministic query algorithms.
\zeroerror*
~\\
This result follows from the following upper and lower bounds, and the fact that $\N=\C$~\cite{de2000characterization}.
\begin{itemize}
\item $\PR_0 \leq \C$,
\item $\PR_0 \geq \max\{\N_0, \N_1\} = \N$.
\end{itemize}

\subsubsection{$\PR_0 \leq \C$}
To prove this bound, we construct a $\PR_0$ algorithm $\A$ that can compute any Boolean function $f$ using at most $\C$ queries to the input. Given an $n$-bit input $x \in \{0,1\}^N$, we compute $f(x) \in \{0,1\}$ as follows:
\begin{itemize}
\item Choose a random certificate (which can be either a 1-certificate or a 0-certificate) $C$.
\item If $x$ is consistent with $C$, then output 0 if $C$ is a 0-certificate, or 1 if $C$ is a 1-certificate.
\item Otherwise, output $\perp$. 
\end{itemize}
Then we post-select on not seeing $\perp$. The query complexity follows from the fact that the algorithm only checks one certificate, whose size will be at most $\C$. To prove correctness, consider the case where $f(x)=0$. In this case, at least one of the 0-certificates will be consistent with $x$, and none of the 1-certificates will be. Let the number of consistent 0-certificates be $c$. Then 
\[
\Pr[\A(x) = 0 | \A(x)\neq\perp] = \frac{\Pr[\A(x) = 0]}{\Pr[\A(x) \neq \perp]} = \frac{c/2^N}{c/2^N} = 1.
\]
Where the value of the denominator follows from the fact that only a 0-certificate can be consistent with $x$ and cause the algorithm to not return $\perp$. A similar argument holds in the case that $f(x)=1$. Hence, the algorithm will return the value of $f(x)$ with certainty in all cases. 

\subsubsection{$\PR_0 \geq \max\{\N_0, \N_1\} = \N$}
Here we show that, given a zero-error classical post-selected query algorithm, it is possible to construct a non-deterministic classical query algorithm whose query complexity is equal to the query complexity of the post-selection algorithm.

Suppose that we are given the description of a $\PR_0$ algorithm for computing some boolean function $f$. Such an algorithm is defined by a probability distribution over a set of decision trees that each output $0, 1$, or $\perp$ for some input $x$. If we replace the $\perp$ output with $0$, then we obtain a non-deterministic algorithm for `yes' instances: any tree chosen from the distribution will either output $f(x)$ or $0$, thus the probability of the algorithm accepting is non-zero if and only if $f(x)=1$. Likewise, if we replace the $\perp$ output with $1$, then we obtain a non-deterministic algorithm for `no' instances. In both cases, the decision trees themselves have not been changed, and therefore the query complexity of the non-deterministic algorithm is the maximum of the query complexity of any of the decision trees, which is just the query complexity of the post-selection algorithm. This is enough to show that $\PR_0(f) \leq \max\{\mathsf{N_0}, \mathsf{N_1}\} = \mathsf{N}$ for all $f$. 

By Lemma \ref{cor:NC}, we have $\mathsf{N} = \mathsf{C}$ (i.e. the non-deterministic query complexity is the same as the certificate complexity) \cite{de2000characterization}, and so $\PR_0 = \mathsf{N} = \mathsf{C}$. This result is perhaps a little surprising -- classical post-selection provides no advantage over non-deterministic query algorithms when it is not allowed to make mistakes.

\subsection{Bounded-error post-selected query algorithms are more powerful than non-deterministic query algorithms}
\subsection{$\PRR \leq \min\{\C_0, \C_1\}$}
Here we show that classical post-selection can have a (much) smaller query complexity than non-deterministic query algorithms when it is allowed to make mistakes. In particular, we consider the case where the post-selection algorithm must return $f(x)$ with probability $\geq 2/3$. 

Given some input $x \in \{0,1\}^N$, we show that it is possible to compute $f(x) \in \{0,1\}$ up to bounded-error using at most $\min\{\C_0, \C_1\}$ queries with classical post-selection. First we show that $f(x)$ can be computed using at most $\C_1$ queries, using the following post-selection algorithm:
\begin{itemize}
\item Choose a random 1-Certificate $C$.
\item If $x$ is consistent with $C$, then output 1.
\item Else output $0$ with probability $1/2^{N+1}$.
\item Otherwise, output $\perp$.
\end{itemize}
Then we post-select on not obtaining $\perp$. The algorithm only makes use of a single $1$-Certificate, and hence makes at most $\C_1$ queries to the input.

To show correctness, first suppose that $f(x) = 1$. Then there must exist at least one 1-Certificate that is consistent with $x$. Since there are at most $2^N$ 1-Certificates for $f$, then the probability that at least one randomly chosen 1-Certificate is consistent with $x$ is $\geq1/2^N$. Then the probability of seeing $1$, post-selecting on not seeing $\perp$, is given by
\begin{eqnarray*}
\Pr[\text{see } 1 | \text{not see } \perp] &=& \frac{\Pr[\text{see } 1]}{\Pr[\text{not see } \perp]} \\
&\geq& \frac{\frac{1}{2^N}}{\left(1-\frac{1}{2^N}\right)\frac{1}{2^{N+1}} + \frac{1}{2^N}} \\
&=& \frac{1}{\frac12 - \frac{1}{2^{N+1}} + 1} \\
&>& \frac{1}{\frac12 + 1} \\
&=& \frac{2}{3}.
\end{eqnarray*}
On the other hand, suppose that $f(x) = 0$. Then there is no 1-Certificate $C$ that is consistent with $x$. So, regardless of our choice of $C$, the algorithm always either returns $0$ with probability $1/2^{N+1}$, or otherwise returns $\perp$. Hence, in this case, we have
\begin{eqnarray*}
\Pr[\text{see } 0 | \text{not see } \perp] &=& \frac{\Pr[\text{see } 0]}{\Pr[\text{not see } \perp]} \\
&=& \frac{1/2^{N+1}}{1/2^{N+1}} \\
&=& 1.
\end{eqnarray*}
So, when $f(x) = 1$, the algorithm returns 1 with probability at least $2/3$. If $f(x)=0$, the algorithm returns $0$ with certainty. It does this by making at most $C_1$ queries to $x$, where $C_1$ is the maximum size of a 1-certificate for $f$. 
This implies that $\PRR(f) \leq \C_1(f)$. 
\\
\\
We can use a similar algorithm to show that $\PRR \leq \C_0$. Consider the following algorithm:
\begin{itemize}
\item Choose a random 0-Certificate $C$.
\item If $x$ is consistent with $C$, then output 0.
\item Else output $1$ with probability $1/2^{N+1}$.
\item Otherwise, output $\perp$.
\end{itemize}
By a similar proof to the previous case, if $f(x) = 0$, we return 0 with probability $>2/3$. If $f(x)=1$, we return 1 with certainty. Hence, $\PRR(f) \leq \C_0(f)$. 

Combining these cases, we have that $\PRR \leq \min\{\C_0, \C_1\} \leq \C$, with equality between the two terms on the right only when $\C_0 = \C_1$.

\subsection{Query complexity with classical post-selection $\approx$ degree of rational approximation with positive coefficients}
Here we prove our main result -- that the complexity of post-selected classical query algorithms is tightly characterised by the minimal degree of rational functions with positive coefficients. 
\maintheorem*

To prove this, we begin by showing that for any $d$-query classical post-selected query algorithm on the variables $\{x_1,\dots,x_N\}$, there is a corresponding degree-$d$ rational function over the variables $\{x_1,\dots,x_N\} \cup \{(1-x_1),\dots,(1-x_N)\}$ that has only positive coefficients.
\begin{lemma}
For all Boolean functions $f$, $\rdeg^+_\epsilon(f) \leq \PR_\epsilon(f)$
\end{lemma}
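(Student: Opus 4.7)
The plan is constructive: starting from a post-selected classical query algorithm $\mathcal{A}$ of query complexity $d$ that $\epsilon$-approximates $f$, I will read off an explicit rational function with nonnegative coefficients of degree at most $d$ which $\epsilon$-approximates $f$ on $D$.

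First I would unpack the algorithm into its combinatorial pieces. By definition $\mathcal{A}$ is a probability distribution $\{p_T\}$ over deterministic decision trees $T$, each of depth at most $d$, whose leaves are labelled by $0$, $1$, or $\perp$. For a fixed tree $T$ and any input $x$, exactly one root-to-leaf path is consistent with $x$, and reaching a specific leaf $\ell$ is determined by at most $d$ literals of the form $x_i$ or $(1-x_i)$ along the path. Thus the $\{0,1\}$-valued indicator that $x$ reaches $\ell$ is precisely the product of those literals, a monomial of degree at most $d$ in the variables $\{x_i, 1-x_i\}$ with coefficient $1$.

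Next I would assemble the numerator and denominator. Let
\[
P_T(x) = \sum_{\ell \text{ labelled } 1} \prod_{\text{literals on path to } \ell} (\text{literal}), \qquad Q_T(x) = P_T(x) + \sum_{\ell \text{ labelled } 0} \prod_{\text{literals on path to } \ell} (\text{literal}).
\]
Both are polynomials in $\{x_i, 1-x_i\}$ of degree at most $d$ with coefficients in $\{0,1\}$, and on any input $x \in \{0,1\}^N$ the first equals $\Pr[T(x)=1]$ and the second equals $\Pr[T(x) \neq \perp]$. Averaging over the distribution gives
\[
P(x) = \sum_T p_T\, P_T(x), \qquad Q(x) = \sum_T p_T\, Q_T(x),
\]
which are polynomials of degree at most $d$ in $\{x_i, 1-x_i\}$ with nonnegative coefficients, and which compute $\Pr[\mathcal{A}(x)=1]$ and $\Pr[\mathcal{A}(x) \neq \perp]$ respectively. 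Since $\mathcal{A}$ is a valid post-selected algorithm, $Q(x) > 0$ for every $x \in D$, so $R(x) := P(x)/Q(x)$ is a well-defined rational function of degree at most $d$ with nonnegative coefficients.

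Finally I would verify the approximation condition. By construction $R(x) = \Pr[\mathcal{A}(x) = 1 \mid \mathcal{A}(x) \neq \perp]$, and the hypothesis that $\mathcal{A}$ is an $\epsilon$-approximating post-selected algorithm for $f$ translates directly into $|R(x) - f(x)| \leq \epsilon$ for every $x \in D$, whether $f(x)=1$ or $f(x)=0$. Hence $\rdeg_\epsilon^+(f) \leq d = \PRe(f)$. There is no real obstacle here beyond the bookkeeping; the only subtle point is the well-definedness of $R$, which is built into the definition of a post-selected algorithm (every input must admit a nonzero post-selection probability).
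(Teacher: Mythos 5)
Your proposal is correct and follows essentially the same route as the paper: decompose the algorithm into its deterministic decision trees, represent each tree's acceptance and post-selection probabilities as degree-$\le d$ polynomials with nonnegative coefficients in the literals $\{x_i, 1-x_i\}$, and average over the distribution to obtain the rational function $P/Q$. The only cosmetic difference is that you write each tree's polynomial as an explicit sum of root-to-leaf path monomials, whereas the paper builds the same polynomial recursively from the root; these coincide.
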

\begin{proof}

Consider a classical post-selected query algorithm with complexity $\PRe$ that computes some boolean function $f : \{0,1\}^N \rightarrow \{0,1\}$ with bounded error $\epsilon$. We take the view that this algorithm consists of a probability distribution $\sigma$ over a number of deterministic decision trees. Each decision tree $i$ computes two functions $g_i(x)$ and $h_i(x)$, where $x = x_1,\dots,x_N$. The first, $g$, is the tree's (proposed) answer to the Boolean function, and the second, $h$, is the post-selection function. Each decision tree must compute \emph{both} of these functions using at most $\PRe$ queries to the input. Denote the decision tree complexities of $g_i$ and $h_i$ by $\D(g_i)$ and $\D(h_i)$, respectively. Associated to each boolean function $g_i$ and $h_i$ are (unique) multivariate polynomials $p_i$ and $q_i$~\cite{nisan1994degree}, such that $\deg(p_i), \deg(q_i) \leq \PRe(f)$. The acceptance probability of the algorithm is given by 
\[
\Pr_i[g_i(x) = 1 | h_i(x) = 1] = \frac{\Pr_i[g_i(x) = 1]}{\Pr_i[h_i(x)=1]} = \frac{\Pr_i[p_i(x)=1]}{\Pr_i[q_i(x)=1]}.
\]
The latter term is just a rational function $P/Q$, where $P(x) = \sum_i \sigma(i)p_i(x)$ and $Q(x) = \sum_i \sigma(i)q_i(x)$. Since $g_i$ and $h_i$ are total Boolean functions, the associated polynomials $p_i$ and $q_i$ can be written as polynomials in the variables $\{x_1,\dots,x_N\} \cup \{(1-x_1),\dots,(1-x_N)\}$, and only positive coefficients. To see this, consider some depth-1 decision tree, which queries the input bit $x_i$, and then outputs $b \in \{0,1\}$ if $x_i=1$, or $1-b$ otherwise. Such a decision tree can be (exactly) represented by the degree-1 polynomial $d(x) = b x_i + (1-b)(1-x_i)$. Now consider a depth-$T$ decision tree $t$, which begins by querying the input bit $x_t$. To obtain the polynomial that represents this tree, we can start at the root and write the corresponding polynomial as $d_t(x) = x_t d_l(x) + (1-x_t) d_r(x)$, where $d_l(x)$ and $d_r(x)$ are the polynomials corresponding to the left and right sub-trees of $t$, respectively. In this way, the polynomial can be defined recursively in $T$ steps, where at each stage the polynomial increases in degree by at most $1$. When we reach a leaf of the tree, then the depth-1 case is used to determine the coefficient of each monomial in the polynomial. It follows that the resulting polynomial can be written as a degree-$T$ polynomial in the variables $\{x_1,\dots,x_N\} \cup \{(1-x_1),\dots,(1-x_N)\}$, and only positive coefficients.

Hence, we have that $P$ and $Q$ are both polynomials with positive coefficients, and therefore $P/Q$ is a rational function with positive coefficients that $\epsilon$-approximates $f$ and has degree $\rdeg^+_\epsilon(f) = \max_i\{\deg(p_i), \deg(q_i)\} = \max_i\{\D(g_i), \D(h_i)\} \leq \PRe(f)$.
\end{proof}
To illustrate, consider the following $\PR$ algorithm for approximating the $AND$ function\footnote{Defined as $AND(x) = \begin{cases} 1 & \text{if}~x = 11\dots1 \\ 0 & \text{o.w.} \end{cases}$} up to bounded error $1/3$:
\begin{itemize}
\item Choose a variable $x_i$ uniformly at random from $x$.
\item If $x_i = 0$, return 0.
\item Else with probability $\frac{1}{2N}$, return $1$.
\item Else return $\perp$.
\item Post-select on not seeing $\perp$.
\end{itemize}
Clearly, this algorithm makes one query to the input, and so we would expect the corresponding rational function to have degree 1. In this algorithm, each decision tree is of the following form:
\begin{itemize}
\item Query variable $x_i$, where $i$ is fixed in advance.
\item If $x_i=0$, return 0.
\item Else, either return $1$ or $\perp$.
\end{itemize}
We require that for each $i$, there are $2N-1$ trees that output $\perp$ in the final step, and only one that outputs $1$. For each $i$, if we choose a tree uniformly at random, the probability that it returns $1$ is $\Pr[\text{see $1$}] = \frac{1}{2N}x_i$. The probability that it doesn't return $\perp$ is $\Pr[\text{not see $\perp$}] = \frac{1}{2N}x_i + (1-x_i)$. Hence, choosing $i$ also uniformly at random, the probability that the post-selection algorithm returns $1$ is:
\[
\Pr[\text{see $1$} | \text{not see $\perp$}] = \frac{\Pr[\text{see $1$}]}{\Pr[\text{not see $\perp$}]} = \frac{\frac{1}{N}\sum_i \frac{1}{2N}x_i}{\frac{1}{N} \sum_i \frac{1}{2N}x_i + (1-x_i)} = \frac{\frac{1}{2N} \sum_i x_i}{\frac{1}{2N}\sum_i x_i + \sum_i (1-x_i) },
\]
which is a degree-$1$ rational function in $\{x_1,\dots,x_N\} \cup \{(1-x_1),\dots,(1-x_N)\}$. To check that this polynomial does indeed approximate the $AND$ function, consider the hardest case to distinguish, when $|x| = N-1$. In this case, we have 
\[
\Pr[\text{see $1$} | \text{not see $\perp$}] = \frac{\frac{1}{2N} (N-1)}{\frac{1}{2N} (N-1) + 1} = \frac{N-1}{N-1 + 2N} \leq 1/3.
\]
On the other hand, when $|x| = N$, we have $\Pr[\text{see $1$} | \text{not see $\perp$}] = 1$, and so this polynomial does indeed approximate the $AND$ function up to one-sided error of $1/3$.
\\
\\
The next step in our proof is to show that a post-selected query algorithm can accurately determine the value of rational functions with positive coefficients.
\begin{lemma}\label{lem:7}
For all Boolean functions $f$, $\PRe(f) \leq 2\rdeg^+_\epsilon(f)$.
\end{lemma}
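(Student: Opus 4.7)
The plan is to give, for any degree-$d$ rational approximation $R=P/Q$ with non-negative coefficients, a post-selected classical query algorithm that uses at most $2d$ queries and $\epsilon$-approximates $f$. The building block will be a single-branch sub-algorithm that samples one monomial from either $P$ or $Q$ (hence uses at most $d$ queries); I will then run two independent copies of it and combine their outputs via an ``agreement'' rule, which boosts the conditional acceptance probability into the right shape.

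Write $P=\sum_T \alpha_T m_T$ and $Q=\sum_S \beta_S m_S$ with $\alpha_T,\beta_S\geq 0$, and set $A=\sum_T\alpha_T$, $B=\sum_S\beta_S$. Each monomial $m_T$ (resp.\ $m_S$) is a product of at most $d$ literals from $\{x_i,1-x_i\}$, so it can be evaluated by a depth-$\leq d$ decision tree. Define the sub-algorithm $\mathcal{A}_\mu$ parameterised by $\mu\in(0,1)$: with probability $\mu$, sample $T$ with probability $\alpha_T/A$, query its literals, and output $1$ if $m_T(x)=1$ and $\perp$ otherwise; with probability $1-\mu$, sample $S$ with probability $\beta_S/B$, query its literals, and output $0$ if $m_S(x)=1$ and $\perp$ otherwise. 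Viewed as a distribution over deterministic decision trees, each tree has depth $\leq d$, and a direct calculation gives $\Pr[\mathcal{A}_\mu=1]=\mu P(x)/A$ and $\Pr[\mathcal{A}_\mu=0]=(1-\mu)Q(x)/B$, whence $\Pr[\mathcal{A}_\mu=1\mid \neq\perp]=R/(R+\gamma)$ with $\gamma=(1-\mu)A/(\mu B)$. This single branch alone is not enough: for $\epsilon<1/2$, the constraint $R\leq \epsilon\Rightarrow R/(R+\gamma)\leq \epsilon$ demands $\gamma\geq 1-\epsilon$, while $R\geq 1-\epsilon\Rightarrow R/(R+\gamma)\geq 1-\epsilon$ demands $\gamma\leq \epsilon$, and these are incompatible.

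The amplification step is therefore to run two independent copies $\mathcal{A}_\mu^{(1)},\mathcal{A}_\mu^{(2)}$ (with fresh randomness) and combine their outputs by the \emph{agreement rule}: output $1$ if both copies output $1$, output $0$ if both copies output $0$, and output $\perp$ otherwise. The combined algorithm is again a distribution over decision trees, now of depth $\leq 2d$. By independence, $\Pr[\text{combined}=1]=(\mu P/A)^2$ and $\Pr[\text{combined}=0]=((1-\mu)Q/B)^2$, so the conditional acceptance probability becomes
\[
\Pr[\text{combined}=1\mid \neq\perp]\;=\;\frac{R^2}{R^2+\gamma^2}.
\]
Choosing $\gamma^2=\epsilon(1-\epsilon)$ makes this an $\epsilon$-approximation of $f$ on both sides: if $R\leq \epsilon$ then $R^2/(R^2+\gamma^2)\leq \epsilon^2/(\epsilon^2+\epsilon(1-\epsilon))=\epsilon$, and if $R\geq 1-\epsilon$ then $R^2/(R^2+\gamma^2)\geq (1-\epsilon)^2/((1-\epsilon)^2+\epsilon(1-\epsilon))=1-\epsilon$. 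So the combined algorithm is a $\PRe$ algorithm of query complexity at most $2d$, proving $\PRe(f)\leq 2\rdeg_\epsilon^+(f)$.

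The main obstacle is conceptual rather than computational: independent samples from $P$ and $Q$ only expose the two Bernoulli parameters $P(x)/A$ and $Q(x)/B$, and a coefficient-matching argument (expanding $\Pr[\text{out}=1]$ and $\Pr[\text{out}\neq\perp]$ as polynomials in these two parameters over the four branch outcomes) shows that no combining rule applied to a single pair of samples can recover $P/Q$ exactly. Squaring the ratio, which is precisely what the agreement rule achieves, converts the asymmetric form $R/(R+\gamma)$ into the symmetric form $R^2/(R^2+\gamma^2)$, and only this symmetric form can be tuned by the single parameter $\gamma$ so as to $\epsilon$-approximate $f$ on both sides simultaneously. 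This is exactly what forces the factor of two in the theorem.
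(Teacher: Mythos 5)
Your proposal is correct and follows essentially the same route as the paper's proof: sample two monomials independently with probability proportional to their (nonnegative) coefficients, accept if both come from $P$ and evaluate to $1$, reject (with a tunable weight) if both come from $Q$ and evaluate to $1$, and otherwise output $\perp$, yielding the conditional acceptance probability $P(x)^2/\bigl(P(x)^2 + c\,Q(x)^2\bigr)$ with the same tuning $c=\epsilon(1-\epsilon)$. The only differences are cosmetic reparametrizations (a branch-selection bias $\mu$ with separate normalisations $A,B$ in place of the paper's single normalisation $\gamma$ and explicit rejection coin $r$), and your closing remark on why one sample cannot suffice matches the paper's implicit reason for taking $k=2$.
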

\begin{proof}
Consider an $n$-bit Boolean function $f$ and a rational function $P/Q$ with degree $d = \rdeg_\epsilon(f)$ and positive coefficients that $\epsilon$-approximates $f$. That is, $|P(x)/Q(x) - f(x)| \leq \epsilon$ for all $x \in \{0,1\}^n$. In particular, for each $x$
\begin{itemize}
\item If $f(x) = 1$, 
	\[
		1 - \epsilon \leq \frac{P(x)}{Q(x)} \leq 1 + \epsilon
	\]
	and hence 
	\[
		(1 - \epsilon)Q(x) \leq P(x) \leq (1 + \epsilon)Q(x).
	\]
\item If $f(x) = 0$,
	\[
		0 \leq \frac{P(x)}{Q(x)} \leq \epsilon
	\]
	and hence 
	\[
		0 \leq P(x) \leq \epsilon Q(x).
	\]
\end{itemize}
To approximate $f$ with bounded error $\epsilon$, all we need to be able to do is determine, up to some reasonable level of error, which of these cases is true. Write $P(x) = \sum_{S \subseteq [n]} \alpha_S X_S$ and $Q(x) = \sum_{S \subseteq [n]} \beta_S Y_S$, and let $\gamma = \sum_S \alpha_S + \beta_S$. We will use the notation $M \in P$ (resp. $M \in Q$) to denote the event that the monomial $M$ belongs to the polynomial $P$ (resp. $Q$).
\\
\\
Consider the following post-selection query algorithm:
\begin{itemize}
\item Choose $k$ monomials $M_1, M_2, \dots, M_k$ (independently, and with replacement) from $P$ or $Q$ with probabilities $p(X_S) = \alpha_S/\gamma$, $p(Y_S) = \beta_S/\gamma$.
\item If \emph{all $k$} monomials evaluate to 1, then:
	\begin{itemize}
	\item If all $k$ monomials belong to $P$, i.e $M_i \in P~ \forall i \in [k]$, return 1.
	\item Else if all $k$ monomials belong to $Q$, i.e. $M_i \in Q~ \forall i \in [k]$, return 0 with probability $r$.
	\item Else, return $\perp$.
	\end{itemize}
\item Else, return $\perp$.
\item We post-select on not seeing $\perp$. 
\end{itemize}
The query complexity of this algorithm is at most $kd$. To see correctness, consider the case $f(x)=1$. The probability of the algorithm returning 1, conditioned on not seeing $\perp$, is

\begin{eqnarray*}
\Pr[\text{see 1}|\text{not see}\perp] &=& \frac{\Pr[\text{see 1}]}{\Pr[\text{not see}\perp]} 
\\
&=& \frac{\prod_{i=1}^k \Pr[M_i=1]\cdot\Pr[M_i \in P|M_i=1]}
{\prod_{i=1}^k \Pr[M_i=1]\cdot\Pr[M_i \in P|M_i=1] + r \cdot \prod_{i=1}^k \Pr[M_i=1]\cdot\Pr[M_i \in Q|M_i=1]} 
\\
&=& \frac{\prod_{i=1}^k \Pr[M_i \in P|M_i=1]}{\prod_{i=1}^k \Pr[M_i \in P|M_i=1] + r \cdot \prod_{i=1}^k \Pr[M_i \in Q|M_i=1]} 
\\
&=& \frac{\frac{1}{\gamma^k}\left(\sum_{S: X_S=1}\alpha_S\right)^k}
{\frac{1}{\gamma^k}\left(\sum_{S: X_S=1}\alpha_S\right)^k + \frac{r}{\gamma^k}\left(\sum_{S: Y_S=1}\beta_S\right)^k}
\\
&=& \frac{ \left(\sum_{S}\alpha_S X_S\right)^k}
{\left( \sum_{S}\alpha_S X_S \right)^k + r \cdot\left(\sum_{S}\beta_S Y_S\right)^k} 
\\
&=& \frac{P(x)^k}{P(x)^k + r \cdot Q(x)^k} 
\\
\end{eqnarray*}
\\
But since $f(x)=1$, we have $(1-\epsilon)Q(x) \leq P(x) \leq (1+\epsilon)Q(x)$, and so 
\begin{eqnarray*}
\frac{P(x)^k}{P(x)^k + r \cdot Q(x)^k} &\geq& \frac{P(x)^k}{P(x)^k + \frac{r}{(1-\epsilon)^k}P(x)^k} \\
&=& \frac{(1-\epsilon)^k}{(1-\epsilon)^k + r}.
\end{eqnarray*}
\\
We can perform the equivalent calculation for the case $f(x) = 0$:
\begin{eqnarray*}
\Pr[\text{see 0}|\text{not see}\perp] &=& \frac{\Pr[\text{see 0}]}{\Pr[\text{not see}\perp]}
\\
&=& \frac{\prod_{i=1}^k \Pr[M_i=1]\cdot r \cdot \Pr[M_i \in Q|M_i=1]}
{\prod_{i=1}^k\Pr[M_i=1]\cdot\Pr[M_i \in P|M_i=1] +  r \cdot \prod_{i=1}^k\Pr[M_i=1]\cdot\Pr[M_i \in Q|M_i=1]} 
\\
&=& \frac{ r \cdot \prod_{i=1}^k \Pr[M_i \in P|M_i=1]}
{\prod_{i=1}^k\Pr[M_i \in P|M_i=1] +  r \cdot \prod_{i=1}^k\Pr[M_i \in Q|M_i=1]} 
\\
&=& \frac{\frac{r}{\gamma^k}\left(\sum_{S: Y_S=1}\beta_S\right)^k}
{\frac{1}{\gamma^k}\left(\sum_{S: X_S=1}\alpha_S\right)^k + \frac{r}{\gamma^k}\left(\sum_{S: Y_S=1}\beta_S\right)^k }
\\
&=& \frac{r\cdot\left(\sum_{S}\beta_S Y_S\right)^k}
{\left(\sum_{S}\alpha_S X_S\right)^k +  r \cdot \left(\sum_{S}\beta_S Y_S\right)^k} 
\\
&=& \frac{r\cdot Q(x)^k}{P(x)^k +  r \cdot Q(x)^k} 
\end{eqnarray*}
\\
In the case $f(x)=0$, $0 \leq P(x) \leq \epsilon Q(x)$, and so
\begin{eqnarray*}
\frac{ r \cdot Q(x)^k}{P(x)^k +  r \cdot Q(x)^k} &\geq& \frac{ r \cdot Q(x)^k}{\epsilon^k Q(x)^k +  r \cdot Q(x)^k} \\ 
&=& \frac{r}{\epsilon^k + r}.
\end{eqnarray*}
Choosing $r = \sqrt{\epsilon^k(1-\epsilon)^k}$ maximises the difference between $\frac{(1-\epsilon)^k}{(1-\epsilon)^k + r}$ and $1 - \frac{r}{\epsilon^k + r}$. With this choice of $r$, the probability of failure (i.e. that the algorithm returns 0 when $f(x) = 1$ and 1 when $f(x) = 0$) is
\[
	p_{\text{fail}} \leq  \frac{\sqrt{\epsilon^k}}{\sqrt{\epsilon^k} + \sqrt{(1-\epsilon)^k}}.
\]
Choosing $k=2$, we have 
\[
	p_{\text{fail}} \leq \frac{\epsilon}{\epsilon + (1-\epsilon)} = \epsilon.
\]
Therefore, by choosing $k=2$ and $r = \epsilon(1-\epsilon)$, we obtain a post-selected classical query algorithm that $\epsilon$-approximates $f$, and has degree at most $2\rdeg^+_\epsilon(f)$, implying that $\PR_\epsilon \leq 2\rdeg^+_\epsilon$. Note that the above algorithm still works for polynomials over the variables $\{x_i,\dots,x_N\} \cup \{(1-x_i),\dots,(1-x_N)\}$, since all we have to do to determine if a monomial $x_i,\dots,x_d(1-x_{i'}),\dots,(1-x_{d'})$ evaluates to 1 is check that all $x_i,\dots,x_d = 1$ and all $x_{i'},\dots,x_{d'} = 0$.  
\end{proof}
Following the first version of this paper, Ronald de Wolf pointed out that there is an alternative proof for Lemma \ref{lem:7} that uses the relationship between polynomials with positive coefficients and the model of query complexity in expectation:
\begin{proof}[Proof (Lemma \ref{lem:7})]
Consider an $n$-bit Boolean function $f$, and suppose that there exists a rational function $P/Q$ with degree $d=\rdeg_\epsilon(f)$ and positive coefficients that $\epsilon$-approximates $f$. Then $P$ and $Q$ are both degree-$d$ nonnegative literal polynomials such that $Q(x)>0$ and $|P(x)/Q(x) - f(x)| \leq \epsilon$ for all $x$. By Lemma \ref{lem:ldeg}, there exist classical $d$-query algorithms $A$ and $B$ whose \emph{expected} outputs equal $P(x)$ and $Q(x)$, respectively. Without loss of generality, we can assume that such algorithms only output either $0$ or some large fixed number $m$ (if the algorithms instead output a lower value $m'$, we can replace them with algorithms that output $m$ with probability $m'/m$, and $0$ otherwise, preserving the expected output value). By definition, we have $\Pr[A(x) = m] = P(x)/m$ and $\Pr[B(x) = m]=Q(x)/m$. Consider the following post-selected query algorithm $\mathcal{A}$:
\begin{itemize}
\item Run $B$ on input $x$. If $B(x)=0$, return $\perp$.
\item Otherwise, run $A$ on input $x$. If $A(x)=m$, return $1$, otherwise return $0$.
\item Post-select on not obtaining $\perp$.
\end{itemize}
Note that $\Pr[B(x)=m] = Q(x)$, which is non-zero everywhere since $Q(x)$ is non-zero everywhere by definition, and therefore post-selecting on the event that $B(x)=1$ is a valid move. The probability that the algorithm returns $1$ is 
\[
\Pr[\mathcal{A}(x) = 1 | \mathcal{A}(x) \neq \perp] = \frac{\Pr[A(x)=B(x)=m]}{\Pr[B(x)=m]} = \frac{\Pr[A(x)=m]}{\Pr[B(x)=m]} = \frac{P(x)/m}{Q(x)/m} = \frac{P(x)}{Q(x)},
\]
which is close to $f(x)$ by assumption.
\end{proof}

\section{Lower bound on the Majority function}\label{sec:lower_bound}
In this section we prove a $\Omega(N)$ lower bound on the bounded-error post-selected classical query complexity of the Majority function.
We consider the problem of estimating the Majority function up to error $\epsilon$, whose post-selected \emph{quantum} query complexity is $\Theta(\log (N / \log(1/\epsilon)) \log (1/\epsilon))$ (where the lower bound was proved by Sherstov \cite{sherstov2009intersection}, and the corresponding upper bound by Mahadev and de Wolf \cite{mahadev2015rational}). Here we prove a lower bound of $\Omega\left(N\left(1 - \sqrt{\frac{\epsilon}{1-\epsilon}}\right)\right)$, by using the relationship between post-selected classical query complexity and the degree of rational functions with positive coefficients. 

The main tool that we will use to prove the lower bound is the well-known symmetrization technique, due
to Minsky and Papert \cite{minsky2017perceptrons}:
\begin{lemma}\label{lem:minsky}[Minsky and Papert]
Let $f : \{0,1\}^N \rightarrow \{0,1\}$ be a symmetric (i.e. invariant under permutations of the bits of the input string) Boolean function on $N$ variables, and let $P: \{0,1\}^N \rightarrow \{0,1\}$ be a degree-$d$ polynomial that $\epsilon$-approximates $f$. Then there exists a univariate polynomial $p : \mathbb{R} \rightarrow \mathbb{R}$,
\[
p(k) =  \EE_{|X|=k}\left[p(x_1,\dots,x_N)\right]
\]
of degree at most $d$, such that $p(|x|) = P(x)$ for all $x \in \{0,1\}^N$, and hence $p$ $\epsilon$-approximates $f$.
\end{lemma}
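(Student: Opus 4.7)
The plan is to follow the standard Minsky--Papert symmetrization argument. I would \emph{define} the candidate univariate polynomial by
\[
p(k) = \EE_{|y|=k}\bigl[P(y)\bigr],
\]
where the expectation is over $y \in \{0,1\}^N$ chosen uniformly from strings of Hamming weight exactly $k$. Two things must then be checked: (i) that $p$, viewed as a function of $k \in \{0,1,\ldots,N\}$, agrees with a univariate real polynomial of degree at most $d$; and (ii) that $|p(k) - f_k| \leq \epsilon$ for every $k$, where $f_k$ denotes the common value of $f$ on strings of weight $k$.

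For (ii) I would use symmetry of $f$: since $f(y) = f_k$ for every $y$ with $|y| = k$, and $P$ $\epsilon$-approximates $f$ pointwise, we have $|P(y) - f_k| \leq \epsilon$ for all such $y$; averaging over $y$ preserves this bound and yields $|p(k) - f_k| \leq \epsilon$. For (i) I would expand $P$ in its multilinear form $P(y) = \sum_{S \subseteq [N],\, |S|\leq d} \alpha_S Y_S$ with $Y_S = \prod_{i \in S} y_i$, so that by linearity of expectation
\[
p(k) = \sum_{S}\alpha_S \cdot \EE_{|y|=k}[Y_S].
\]
The key computation is that $\EE_{|y|=k}[Y_S]$ equals the probability that a uniformly random weight-$k$ string has ones in all positions of $S$, namely
\[
\EE_{|y|=k}[Y_S] = \frac{\binom{N-|S|}{k-|S|}}{\binom{N}{k}} = \frac{k(k-1)\cdots(k-|S|+1)}{N(N-1)\cdots(N-|S|+1)},
\]
which, when viewed as a function of $k$, is a polynomial of degree exactly $|S| \leq d$. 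Summing over $S$ therefore exhibits $p(k)$ as a univariate polynomial of degree at most $d$, completing the argument.

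There is no real obstacle here; the only content is the evaluation of the falling factorial above and recognising it as a polynomial in $k$. An alternative route, which I would mention in passing, is to apply a symmetrization operator that replaces each monomial $Y_S$ with the uniform average over all monomials $Y_{S'}$ with $|S'|=|S|$, producing a symmetric multilinear polynomial of the same degree; one then invokes the standard fact that every symmetric multilinear polynomial on $\{0,1\}^N$ can be written as $q(|y|)$ for a univariate polynomial $q$ of the same degree. Either route gives the claimed $p$ with $\deg(p) \leq d$ and $\epsilon$-approximation of $f$ on $\{0,1,\ldots,N\}$.
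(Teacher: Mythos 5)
Your proof is correct and follows the standard symmetrization argument; it is essentially the same computation the paper itself carries out when proving the generalisation to literal polynomials in Claim \ref{lem:form}, where the expectation $\EE_{|x|=k}[\prod_{i\in S}x_i\prod_{j\in T}(1-x_j)]$ is evaluated as a ratio of binomial coefficients and recognised as a degree-$(|S|+|T|)$ polynomial in $k$. The only caveat worth noting is that your argument (correctly) establishes $|p(k)-f_k|\leq\epsilon$ rather than the literal identity $p(|x|)=P(x)$ asserted in the lemma statement, which holds only when $P$ is itself symmetric; the approximation property is all that is used downstream, so this is an imprecision in the statement rather than a gap in your proof.
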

Our first step is to explicitly write down the form of the univariate polynomials that correspond to multivariate polynomials in the $2N$ variables $\{x_0, \dots, x_N\} \cup \{(1-x_0), \dots, (1-x_N)\}$. 
\begin{claim}\label{lem:form}
Let $P : \{0,1\}^{2N} \rightarrow \{0,1\}$ be a degree-$d$ polynomial in the variables $\{x_0, \dots, x_N\} \cup \{(1-x_0), \dots, (1-x_N)\}$, which is necessarily of the form
\[
P(x) = \stsum a_{S,T} \prod_{i \in S} x_i \prod_{j \in T} (1-x_j).
\]
Then there exists a univariate polynomial $p$ of degree at most $d$, of the form
\[
p(k) = \stsum a_{S,T} \frac{(N-|S|-|T|)!}{N!} k(k-1)\dots(k-|S|+1)\cdot (N-k)(N-k-1)\dots(N-k-|T|+1)
\]
such that, if $P$ $\epsilon$-approximates a symmetric Boolean function $f$, then $p$ also $\epsilon$-approximates $f$, and depends only on the Hamming weight of the input string.
\end{claim}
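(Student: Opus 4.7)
The plan is to apply the Minsky--Papert symmetrization argument (Lemma \ref{lem:minsky}) to $P$, and then explicitly evaluate the symmetrized polynomial monomial by monomial. Since $f$ is symmetric, averaging $P(x_{\pi(1)}, \ldots, x_{\pi(N)})$ over all permutations $\pi \in S_N$ yields a polynomial $\widetilde P$ of degree at most $d$ that depends only on $|x|$, and still $\epsilon$-approximates $f$ on $\{0,1\}^N$ (because $f$ takes the same value on all inputs of a given Hamming weight). Writing $\widetilde{P}(x) = p(|x|)$ defines the univariate polynomial $p$; it then suffices to identify the coefficients of $p$ in the form claimed.

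First I would note that, by multilinearity, we may assume $S \cap T = \emptyset$ in every monomial of $P$ (since $x_i(1-x_i) = 0$ on $\{0,1\}$). Next I would show that for a monomial $M_{S,T}(x) = \prod_{i \in S} x_i \prod_{j \in T}(1-x_j)$ with $|S|+|T| \leq d$ and $S \cap T = \emptyset$, the symmetrized value $\E_{|x|=k}\!\left[M_{S,T}(x)\right]$ equals the probability that a uniformly random $x$ of Hamming weight $k$ has $x_i=1$ for all $i \in S$ and $x_j=0$ for all $j \in T$. A direct counting argument gives
\[
\EE_{|x|=k}\!\left[M_{S,T}(x)\right] \;=\; \frac{\binom{N-|S|-|T|}{k-|S|}}{\binom{N}{k}}.
\]
Expanding the binomial coefficients and cancelling yields exactly
\[
\frac{(N-|S|-|T|)!}{N!}\,k(k-1)\cdots(k-|S|+1)\,(N-k)(N-k-1)\cdots(N-k-|T|+1),
\]
which is a polynomial in $k$ of degree $|S|+|T|$.

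Summing over all monomials with the coefficients $a_{S,T}$ gives the explicit form claimed in the statement, and since each term has degree at most $|S|+|T| \leq d$, the total degree of $p$ is at most $d$. Finally, the approximation property transfers: for any $x \in \{0,1\}^N$ with $|x|=k$, the symmetrization $\widetilde P(x)$ is a convex combination of values $P(\pi \cdot x)$, all of which $\epsilon$-approximate $f(\pi \cdot x) = f(x)$, so $|p(k)-f(x)| = |\widetilde P(x) - f(x)| \leq \epsilon$.

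There is no deep obstacle here; the argument is essentially bookkeeping. The one step that requires care is verifying the combinatorial identity for $\E_{|x|=k}[M_{S,T}(x)]$ in the case $S \cap T \neq \emptyset$ (where the expectation is zero, matching the vanishing of $M_{S,T}$ pointwise) and the degenerate cases $|S| > k$ or $|T| > N-k$ (where both sides vanish, consistent with the falling-factorial expression in the claim). Once these edge cases are checked, the claim follows immediately from linearity of expectation applied to Minsky--Papert symmetrization.
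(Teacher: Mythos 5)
Your proposal is correct and follows essentially the same route as the paper: symmetrize over inputs of fixed Hamming weight, compute $\EE_{|x|=k}[M_{S,T}(x)] = \binom{N-|S|-|T|}{k-|S|}/\binom{N}{k}$ by counting, and expand into the falling-factorial form, with linearity giving the claimed expression for $p$. Your additional attention to the degenerate cases ($S\cap T\neq\emptyset$, $|S|>k$, $|T|>N-k$) is a minor refinement the paper leaves implicit, but the argument is the same.
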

\begin{proof}
In the usual way, we take the expectation of $P$ over permutations of the input bits, or equivalently, over all bit strings with the same Hamming weight as the input. That is, we consider
\begin{eqnarray*}
p(k) &=& \EE_{|x|=k} [P(x)] \\
&=& \EE_{|x|=k} \left[ \stsum a_{S,T} \prod_{i \in S} x_i \prod_{j \in T} (1-x_j) \right] \\
&=& \stsum a_{S,T} \EE_{|x|=k} \left[\prod_{i \in S} x_i \prod_{j \in T} (1-x_j) \right].
\end{eqnarray*}
The expectation over strings of Hamming weight $k$ can be written as
\begin{eqnarray*}
\EE_{|x|=k} \left[\prod_{i \in S} x_i \prod_{j \in T} (1-x_j) \right] &=& \left. {N-|S|-|T| \choose k-|S|} \middle/ {N \choose k} \right. \\
&=& \frac{(N-|S|-|T|)!}{(k-|S|)!(N-k-|T|)!} \cdot \frac{k!(N-k)!}{N!} \\
&=& \frac{(N-|S|-|T|)!}{N!} \cdot \frac{k!}{(k-|S|)!} \cdot \frac{(N-k)!}{(N-k-|T|)!} \\
\end{eqnarray*}\vspace{-0.75cm}
\[= \frac{(N-|S|-|T|)!}{N!} k(k-1)\dots(k-|S|+1)\cdot (N-k)(N-k-1)\dots(N-k-|T|+1) \]
which is a degree $(|S|+|T|)$ polynomial in $k$. Hence, $p(k)$ is a degree-$d$ polynomial in $k$.
\end{proof}
We remark that the coefficients $a_{S,T}$ are preserved in the transformation from $P$ to the univariate polynomial $p$. Therefore if $P$ is a polynomial with nonnegative coefficients, then $p$ has the property that all of the coefficients $a_{S,T}$ are nonnegative -- a fact that will be useful later on. 
\\
\\
Finally, we check that the symmetrization technique can still be used when working with rational functions.
\begin{lemma}\label{lem:rational_symm}
Let $R(x) = P(x)/Q(x)$ be a degree $d$ rational function that $\epsilon$-approximates a symmetric $N$-bit Boolean function $f : \{0,1\}^N \rightarrow \{0,1\}$. Then there exist univariate polynomials $p$ and $q$ of degrees at most $d$ such that 
\[
\left| f(x) - \frac{p(|x|)}{q(|x|)} \right| \leq \epsilon,
\]
where $p(k) = \EE_{|x|=k}[P(x)]$ and $q(k) = \EE_{|x|=k}[Q(x)]$.
\end{lemma}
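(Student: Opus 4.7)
The plan is to imitate the classical Minsky--Papert symmetrization from Lemma \ref{lem:minsky}/Claim \ref{lem:form}, but applied at the level of each Hamming weight \emph{separately}, so that the ratio $p/q$ inherits the approximation guarantee fibre-by-fibre. The difficulty, compared to the polynomial case, is that one cannot simply take the expectation of the quotient $P/Q$: instead one must symmetrize $P$ and $Q$ individually and then control the ratio of the symmetrizations.

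First I would fix a Hamming weight $k \in \{0, 1, \dots, N\}$ and note that since $f$ is symmetric, $f(x)$ takes a constant value $f_k$ on the slice $\{x : |x| = k\}$. Since $Q$ is required to be non-zero everywhere (and, after multiplying both $P$ and $Q$ by $-1$ if necessary, we may assume $Q(x) > 0$ on the Boolean cube), the $\epsilon$-approximation hypothesis $|f_k - P(x)/Q(x)| \leq \epsilon$ can be rewritten as the pair of polynomial inequalities
\[
(f_k - \epsilon)\, Q(x) \;\leq\; P(x) \;\leq\; (f_k + \epsilon)\, Q(x)
\]
holding for all $x$ with $|x| = k$. (The sign-consistency of $Q$ across the slice is the only subtlety here; it is the main obstacle, and I would handle it by reducing to the positive-$Q$ case as above.)

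Next I would average both inequalities over the uniform distribution on strings of Hamming weight $k$. Linearity and monotonicity of expectation yield
\[
(f_k - \epsilon)\, q(k) \;\leq\; p(k) \;\leq\; (f_k + \epsilon)\, q(k),
\]
where $p(k) = \EE_{|x|=k}[P(x)]$ and $q(k) = \EE_{|x|=k}[Q(x)]$. Since $q(k)$ is the expectation of strictly positive values it satisfies $q(k) > 0$, so dividing through gives exactly $\left|f_k - p(k)/q(k)\right| \leq \epsilon$, which is the desired bound at inputs of weight $k$.

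Finally, applying the symmetrization calculation of Claim \ref{lem:form} to $P$ and to $Q$ separately, the maps $k \mapsto p(k)$ and $k \mapsto q(k)$ are each given by a univariate polynomial of degree at most $\deg(P), \deg(Q) \leq d$ respectively, hence of degree at most $d$. Since $k$ ranges over $\{0, 1, \dots, N\}$ and the above inequality holds for every $k$, we conclude $|f(x) - p(|x|)/q(|x|)| \leq \epsilon$ for all $x \in \{0,1\}^N$, which is the statement of the lemma.
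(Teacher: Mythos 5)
Your proposal is correct and follows essentially the same route as the paper's proof: clear the denominator using the positivity of $Q$, symmetrize $P$ and $Q$ separately via Minsky--Papert, and then divide by the positive symmetrized $q(k)$. The only cosmetic difference is that you average over each Hamming-weight slice directly rather than over permutations $\sigma \in S_N$, which is the same operation; both arguments also share the same implicit assumption that $Q$ has constant (positive) sign on the cube, which holds in the intended application to rational functions with nonnegative coefficients.
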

\begin{proof}
Since $P(x)/Q(x)$ $\epsilon$-approximates $f(x)$, we have
\[
f(x) - \epsilon \leq P(x)/Q(x) \leq f(x) + \epsilon 
\]
and since $Q(x)$ is positive for all $x$,
\[
Q(x)(f(x) - \epsilon) \leq P(x) \leq Q(x)(f(x) + \epsilon).
\]
We take the expectation over permutations of the input bits (which is equivalent to taking the expectation over strings with identical Hamming weights). By Lemma \ref{lem:minsky}, there exist univariate polynomials $p$ and $q$ such that $\E_\sigma[P(\sigma(x))] = p(|x|)$ and $\E_\sigma[Q(\sigma(x))] = q(|x|)$, where the expectation is taken over permutations $\sigma \in S_N$. Then since $f$ is invariant under permutations of the input bits, 
\begin{eqnarray*}
\E_\sigma[Q(\sigma(x))(f(\sigma(x)) - \epsilon)] \leq& \E_\sigma[P(\sigma(x))] &\leq \E_\sigma[Q(\sigma(x))(f(\sigma(x)) + \epsilon)] \\
f(x)\E_\sigma[Q(\sigma(x))] - \epsilon\E_\sigma[Q(\sigma(x))] \leq& \E_\sigma[P(\sigma(x))] &\leq f(x)\E_\sigma[Q(\sigma(x))] + \epsilon\E_\sigma[Q(\sigma(x))] \\
q(|x|)f(x) - q(|x|)\epsilon \leq& p(|x|) &\leq q(|x|)f(x) + q(|x|)\epsilon \\
f(x) - \epsilon \leq& p(|x|)/q(|x|) &\leq f(x) + \epsilon \\
\end{eqnarray*}
which proves the result. 
\end{proof}
Note that if $P$ and $Q$ are polynomials with nonnegative coefficients in the variables $\{x_0, \dots, x_N\} \cup \{(1-x_0), \dots, (1-x_N)\}$ (i.e. precisely those polynomials that correspond to the acceptance probabilities of post-selected classical query algorithms), then $p$ and $q$ are both of the form described in Claim \ref{lem:form}, where each coefficient $a_{S,T}$ is nonnegative. 
\\
\\
We are now ready to prove the lower bound.
\begin{theorem}\label{theo:bound}
$\PRe(MAJ_N) \geq \frac{N}{2}\left(1 - \sqrt{\frac{\epsilon}{1-\epsilon}}\left(1 + o(1) \right)\right)$.
\end{theorem}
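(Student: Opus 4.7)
The plan is to combine Theorem \ref{theo:main} with the symmetrization machinery (Lemma \ref{lem:rational_symm} and Claim \ref{lem:form}) to reduce the problem to a univariate question, then show that univariate rational functions whose numerator and denominator are nonnegative combinations of the ``falling factorial'' basis $b_{s,t}(k) := k(k-1)\cdots(k-s+1)\cdot(N-k)(N-k-1)\cdots(N-k-t+1)$ with $s+t\le d$ cannot have their ratio jump from $\le\epsilon$ to $\ge 1-\epsilon$ between $k=\lfloor N/2\rfloor$ and $k=\lfloor N/2\rfloor+1$ unless $d$ is large.

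First I would apply Theorem \ref{theo:main} so that it suffices to lower bound $\rdeg^+_\epsilon(MAJ_N)$. Let $d=\rdeg^+_\epsilon(MAJ_N)$ and let $P/Q$ be a degree-$d$ rational function with nonnegative coefficients in the variables $\{x_i,1-x_i\}_{i=1}^N$ that $\epsilon$-approximates $MAJ_N$. By Lemma \ref{lem:rational_symm} and Claim \ref{lem:form}, I obtain univariate polynomials $p(k)=\sum_{s+t\le d} c_{s,t}\,b_{s,t}(k)$ and $q(k)=\sum_{s+t\le d} c'_{s,t}\,b_{s,t}(k)$, with $c_{s,t},c'_{s,t}\ge 0$ (the constants from Claim \ref{lem:form} being absorbed into the coefficients), such that $|p(k)/q(k)-MAJ_N(k)|\le\epsilon$ for all integer $0\le k\le N$.

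The key step is the ``weighted average'' observation: since $p$ is a nonnegative combination of the $b_{s,t}$, for any integer $k$,
\[
\frac{p(k+1)}{p(k)} \;=\; \frac{\sum c_{s,t} b_{s,t}(k+1)}{\sum c_{s,t} b_{s,t}(k)} \;\le\; \max_{s+t\le d}\frac{b_{s,t}(k+1)}{b_{s,t}(k)},
\]
and symmetrically for $q(k)/q(k+1)$. Taking $N$ even and $n=N/2$, a direct computation gives
\[
\frac{b_{s,t}(n+1)}{b_{s,t}(n)} \;=\; \frac{n+1}{n-s+1}\cdot\frac{n-t}{n},
\]
which is maximized (over $s+t\le d$) at $(s,t)=(d,0)$, yielding $\frac{n+1}{n-d+1}$. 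Similarly, $\frac{b_{s,t}(n)}{b_{s,t}(n+1)}$ is maximized at $(s,t)=(0,d)$, yielding $\frac{n^2}{(n+1)(n-d)}$. Multiplying these bounds,
\[
\frac{p(n+1)/q(n+1)}{p(n)/q(n)} \;\le\; \frac{n^2}{(n-d)(n-d+1)}.
\]
Since $MAJ_N(n)=0$ forces $p(n)/q(n)\le\epsilon$ and $MAJ_N(n+1)=1$ forces $p(n+1)/q(n+1)\ge 1-\epsilon$, the left-hand side is at least $(1-\epsilon)/\epsilon$, so $n^2\epsilon\ge (1-\epsilon)(n-d)(n-d+1)$. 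Taking square roots gives $n\sqrt{\epsilon/(1-\epsilon)}\ge (n-d)\bigl(1+o(1)\bigr)$, i.e.\ $d\ge n\bigl(1-\sqrt{\epsilon/(1-\epsilon)}(1+o(1))\bigr)=\tfrac{N}{2}\bigl(1-\sqrt{\epsilon/(1-\epsilon)}(1+o(1))\bigr)$, as claimed; the case $N$ odd is handled identically by comparing $k=(N-1)/2$ and $k=(N+1)/2$, the only change being lower-order corrections absorbed into the $o(1)$.

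I expect the main obstacle to be identifying the correct extremal basis elements: one must notice that the relevant growth rate at the threshold is governed separately by $b_{d,0}$ (which grows fastest going right) and $b_{0,d}$ (which grows fastest going left), and that nonnegativity of coefficients is exactly what forbids the cancellations that would allow a generic degree-$d$ rational function to jump by a factor of $(1-\epsilon)/\epsilon$ across one step. Once the basis and the weighted-average inequality are in place, the rest is an elementary optimization.
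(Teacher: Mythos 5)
Your proof follows essentially the same route as the paper's: symmetrize to the falling-factorial basis via Lemma \ref{lem:rational_symm} and Claim \ref{lem:form}, use nonnegativity of the coefficients to bound the one-step change of $p/q$ across the threshold $k=N/2$ by the extremal basis ratios, and solve the resulting inequality for $d$. The only issue is a harmless arithmetic slip: the maximum of $b_{s,t}(n)/b_{s,t}(n+1)$ over $s+t\le d$ is $\frac{n}{n-d}$ (attained at $(s,t)=(0,d)$), not $\frac{n^2}{(n+1)(n-d)}$, which changes your product bound to $\frac{n(n+1)}{(n-d)(n-d+1)}$ but leaves the asymptotic conclusion intact.
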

\begin{proof}
To prove the lower bound, we prove a lower bound on the degree of rational functions with positive coefficients that approximate the Majority function, and then use Theorem \ref{theo:main} to obtain a lower bound on the post-selected query complexity. 
To begin, consider a degree-$d$ rational function $P(x)/Q(x)$, with nonnegative coefficients, that approximates the Majority function up to error $\epsilon$. Then by Lemma \ref{lem:rational_symm}, there exist univariate polynomials $p$ and $q$ such that $p(|x|)/q(|x|)$ $\epsilon$-approximates the Majority function. Since $MAJ(x) = 1$ when $|x| > N/2$ and $MAJ(x) = 0$ when $|x| \leq N/2$, we have the inequalities
\begin{equation}\label{eq:ineq1}
(1-\epsilon) q\left(\n + 1\right) \leq p\left(\n+1\right) \leq (1+\epsilon) q\left(\n+1\right)
\end{equation}
and 
\begin{equation}\label{eq:ineq2}
0 \leq p\left(\n\right) \leq \epsilon q\left(\n\right).
\end{equation}
From Lemma \ref{lem:form}, we can write  
\[
q\left(\n\right) = \stsum \gamma_{S,T} b_{S,T} \n \left( \n-1 \right) \dots\ \left( \n-|S|+1 \right) \cdot\ \n\left(\n-1\right)\dots\ \left(\n-|T|+1\right) 
\]
and
\begin{eqnarray*}
\hspace{-2cm}
q\left(\n + 1\right) =& \stsum \gamma_{S,T} b_{S,T} \left(\n+1\right)\n\left(\n-1\right)\dots\ \left(\n-|S|+2\right)\cdot \left(\n-1\right)\dots\ \left(\n-|T|+1\right)\left(\n-|T|\right). \\
\end{eqnarray*}
Defining $\Delta_{S,T} := \n\left(\n-1\right)\dots\ \left(\n-|S|+1\right)\cdot \n\left(\n-1\right)\dots\ \left(\n-|T|+1\right)$, these become
\[
q\left(\n\right) = \stsum \gamma_{S,T} b_{S,T} \Delta_{S,T}
\]
and
\[
q\left(\n+1\right) = \stsum \gamma_{S,T} b_{S,T} \Delta_{S,T} \cdot \frac{\left(\n+1\right)\left(\n - |T|\right)}{\n \left(\n - |S| + 1\right)}.
\]
Our approach will be to bound the value of the term on the right for different values of $d$.
Since $|S| + |T| \leq d$, we have the following lower bound
\begin{eqnarray*}
\frac{\left(\n+1\right)\left(\n - |T|\right)}{\n \left(\n - |S| + 1\right)} 
&\geq& \frac{\left(\n+1\right)\left(\n - d\right)}{\n \left(\n + 1\right)} \\
&=& \frac{N - 2d}{N}.
\end{eqnarray*}

Since $\gamma_{S,T}, b_{S,T}, \Delta_{S,T} \geq 0$ for all choices of $S$ and $T$, this implies that 
\begin{equation}
q\left(\n\right) \leq \frac{N}{N-2d} \cdot q\left(\n + 1\right).
\end{equation}
Combining this with inequalities (\ref{eq:ineq1}) and (\ref{eq:ineq2}), we have
\[
p\left(\n\right) \leq \epsilon \cdot q\left(\n\right) \leq \frac{\epsilon N}{N-2d} \cdot q\left(\n + 1\right) \leq \frac{\epsilon N}{(1-\epsilon)(N-2d)} \cdot p\left(\n+1\right)
\]
and hence
\begin{equation}\label{ineq:contradiction}
p\left(\n\right) \leq \frac{\epsilon N}{(1-\epsilon)(N-2d)} \cdot p\left(\n+1\right).
\end{equation}
Similarly, we can write
\[
p\left(\n\right) = \stsum \gamma_{S,T} a_{S,T} \Delta{S,T}
\]
and 
\[
 \frac{\epsilon}{(1-\epsilon)(N-2d)} \cdot p\left(\n + 1\right) = \stsum \gamma_{S,T} a_{S,T} \Delta{S,T} \cdot  \frac{\epsilon}{(1-\epsilon)(N-2d)} \frac{\left(\n+1\right)\left(\n - |T|\right)}{\n \left(\n - |S| + 1\right)}.
\]
This time we want to \emph{upper bound} the value of the term on the right. Again using the fact that $|S| + |T| \leq d$, we have
\begin{eqnarray*}
 \frac{\epsilon N}{(1-\epsilon)(N-2d)} \frac{ \left(\n+1\right)\left(\n - |T|\right)}{\n \left(\n - |S| + 1\right)} 
&\leq& \frac{\epsilon N \left(\n + 1\right) \n}{(1-\epsilon)(N-2d)\n \left(\n -d + 1\right)} \\
&=& \frac{\epsilon N\left(\n+1\right)}{(1-\epsilon)(N-2d)\left(\n -d + 1\right)}.
\end{eqnarray*}
We consider the values of $d$ for which this quantity is strictly smaller than 1:
\begin{eqnarray*}
\frac{\epsilon N\left(\n+1\right)}{(1-\epsilon)(N-2d)\left(\n -d + 1\right)} &<& 1 \\
\frac{\epsilon}{(1-\epsilon)} N\left(\n+1\right)&<& (N-2d)\left(\n -d + 1\right) \\
0 &<& 2d^2 - 2(N+1)d + \left(1 - \frac{\epsilon}{(1-\epsilon)}\right)N(\n + 1).
\end{eqnarray*}
This inequality is satisfied when 
\[
2d < (N+1) - \sqrt{\frac{\epsilon}{1-\epsilon}(N^2+2N) + 1}.
\]
\\
Once again, using the fact that $\gamma_{S,T}, b_{S,T}, \Delta_{S,T} \geq 0$ for all choices of $S$ and $T$, we find that if $d$ satisfies the above constraint, then inequality (\ref{ineq:contradiction}) cannot be satisfied. Therefore, to $\epsilon$-approximate the Majority function, any rational function with positive coefficients must have degree at least $\frac{1}{2} \left((N+1) - \sqrt{\frac{\epsilon}{1-\epsilon}(N^2+2N) + 1}\right)$. In particular, if we take $\epsilon = 1/3$, then the degree $d$ must satisfy 
\[
d \geq \frac{N}{8}
\]
and so $\PRR(MAJ) = \Omega(N)$. More generally, we have $\PRe(MAJ_N) \geq \frac{N}{2}\left(1 - \sqrt{\frac{\epsilon}{1-\epsilon}}\left(1 + o(1) \right)\right)$.

\end{proof}

\noindent It turns out that this proof can be generalised in a straightforward manner to encompass \emph{any} symmetric function. In Appendix \ref{app:generalisation}, we prove the following:
\begin{restatable}{theorem}{theogeneralisation}\label{theo:generalisation}
Let $f : \{0,1\}^N \rightarrow \{0,1\}$ be any non-constant symmetric function on $N$ bits. Write $f_k = f(x)$ for $|x| = k$, define
\[
\Gamma(f) = \min\{|2k - N + 1| : f_k \neq f_{k+1}, 0 \leq k \leq N-1,
\]
and let $T$ be the value of $k$ for which this quantity is minimised. 
Then 
\[
\PR(f) \geq \frac{1}{8}\left(N - \Gamma(f)\right),
\]
and more generally for $\epsilon \in [0,1/2]$, 
\[
\PRe(f_T) \geq  \frac{1}{2}(N+1) \pm \sqrt{(N+1)^2 - 4\left(\frac{\epsilon}{1-\epsilon}\right)(N-T)(T+1)}.
\]
\end{restatable}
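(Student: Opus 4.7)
The strategy is to mimic the Majority lower bound of Theorem~\ref{theo:bound}, replacing the symmetric threshold $N/2$ with the critical Hamming weight $T$ that achieves $|2T - N + 1| = \Gamma(f)$. By the symmetries $f \leftrightarrow 1-f$ and $x \leftrightarrow \bar x$, which both preserve $\rdeg^+_\epsilon$ and $\PRe$, I may assume $f_T = 0$, $f_{T+1} = 1$, and $T \leq (N-1)/2$. Starting from any degree-$d$ rational function $P/Q$ with nonnegative coefficients that $\epsilon$-approximates $f$, Lemmas~\ref{lem:rational_symm} and~\ref{lem:form} supply univariate polynomials
\[
p(k) = \stsum \gamma_{S,T}\, a_{S,T}\, \Delta_{S,T}(k), \qquad q(k) = \stsum \gamma_{S,T}\, b_{S,T}\, \Delta_{S,T}(k),
\]
with all $\gamma_{S,T}, a_{S,T}, b_{S,T} \geq 0$ and $\Delta_{S,T}(k) := k(k-1)\cdots(k-|S|+1)(N-k)(N-k-1)\cdots(N-k-|T|+1)$, such that $p/q$ still $\epsilon$-approximates $f$. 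The jump conditions read $0 \leq p(T) \leq \epsilon q(T)$ and $p(T+1) \geq (1-\epsilon) q(T+1)$.

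\textbf{Two-sided control on the $\Delta$-ratios.} The core of the argument is that for every admissible $(S,T)$ (with $|S|+|T| \leq d$ and $|S| \leq T$),
\[
\frac{\Delta_{S,T}(T+1)}{\Delta_{S,T}(T)} \;=\; \frac{(T+1)(N-T-|T|)}{(T-|S|+1)(N-T)} \;\in\; \left[\,\tfrac{N-T-d}{N-T},\; \tfrac{T+1}{T-d+1}\,\right],
\]
where the lower bound follows by substituting $|T| \leq d, |S| \geq 0$ and the upper bound by substituting $|T| \geq 0, |S| \leq d$. Because every coefficient is nonnegative, these pointwise bounds transfer to the sums: $q(T+1) \geq \tfrac{N-T-d}{N-T}\, q(T)$ and $p(T+1) \leq \tfrac{T+1}{T-d+1}\, p(T)$. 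Chaining with the jump inequalities collapses everything to $p(T) \leq \tfrac{\epsilon(N-T)(T+1)}{(1-\epsilon)(N-T-d)(T-d+1)}\, p(T)$, forcing the composite factor to be at least $1$. After rearrangement this is the quadratic inequality $d^2 - (N+1)d + \tfrac{1-2\epsilon}{1-\epsilon}(N-T)(T+1) \leq 0$, whose smaller root yields
\[
\rdeg^+_\epsilon(f) \;\geq\; \tfrac{1}{2}\!\left[\,(N+1) - \sqrt{(N+1)^2 - 4\tfrac{1-2\epsilon}{1-\epsilon}(N-T)(T+1)}\,\right],
\]
which Theorem~\ref{theo:main} converts into the claimed bound on $\PRe(f)$ (noting that $\tfrac{1-2\epsilon}{1-\epsilon} = \tfrac{\epsilon}{1-\epsilon}$ at $\epsilon = 1/3$). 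To extract the clean form $\PR(f) \geq (N - \Gamma(f))/8$, apply the algebraic identity $(N-T)(T+1) = \tfrac{1}{4}[(N+1)^2 - \Gamma(f)^2]$ together with the first-order estimate $\sqrt{A^2 - B} \leq A - B/(2A)$.

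\textbf{Main obstacle.} The real technical hurdle is the hidden constraint $|S| \leq T$ in the upper ratio bound: if $|S| > T$ then $\Delta_{S,T}(T) = 0$ while $\Delta_{S,T}(T+1) \neq 0$, so the term-by-term comparison collapses. This means the clean argument only runs for $d \leq T$. Fortunately the target bound $d \gtrsim (2T+1)/8$ is comfortably at most $T$ for every $T \geq 0$, so whenever $d > T$ the lower bound is already achieved trivially; one just has to carve out this case split at the beginning. A secondary check is that the bit-complement reduction really does exchange the roles of $|S|$ and $|T|$ in the symmetrized polynomials, legitimizing the assumption $T \leq (N-1)/2$. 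Both of these are mechanical, but they are precisely where the nonnegativity hypothesis is doing essential work — without it, neither of the pointwise-to-global monotonicity steps would go through.
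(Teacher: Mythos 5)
Your proposal is correct and follows essentially the same route as the paper's own proof in Appendix~\ref{app:generalisation}: symmetrize via Lemmas~\ref{lem:rational_symm} and~\ref{lem:form}, bound the term-wise ratios $\Delta_{S,T}(T+1)/\Delta_{S,T}(T)$ from below for $q$ and from above for $p$ using nonnegativity of all coefficients, and read the degree lower bound off the resulting quadratic in $d$. Your explicit case split for $d > T$ (to handle monomials with $|S| = T+1$, for which $\Delta_{S,T}(T) = 0$ while $\Delta_{S,T}(T+1) \neq 0$, so the term-by-term upper bound on $p(T+1)/p(T)$ would otherwise break) patches a boundary case the paper passes over silently, and your remark that the coefficient under the square root should be $\tfrac{1-2\epsilon}{1-\epsilon}$ rather than $\tfrac{\epsilon}{1-\epsilon}$ (the two agree only at $\epsilon = 1/3$) is consistent with what the paper's own derivation actually produces.
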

\noindent Intuitively, $\Gamma(f)$ is a measure of how close to the middle a `step change' (from 0 to 1, or vice versa) occurs. For instance, $\Gamma(OR) = N-1$ and $\Gamma(MAJ) = 1$.

\subsection{Post-selection algorithm for Majority}
Here we describe a classical post-selected query algorithm for computing Majority up to bounded error $1/3$, which uses $N/2 + 1$ queries to the input. This matches the lower bound up to a constant factor. Given an $N$-bit input $x$, the algorithm is as follows:
\begin{itemize}
\item Choose a subset of bits of size $\n + 1$.
\item If all bits are $1$, then return 1.
\item Else with probability $r$ return 0.
\item Else return $\perp$.
\end{itemize}
We post-select on not seeing $\perp$. To see that this algorithm works, consider an input such that $|x| = k$. Then the probability of the algorithm returning 1 is 
\begin{eqnarray*}
p_1 := \Pr[\text{see } 1 | \text{not see}\perp] &=& \frac{\left. {k \choose \n+1} \middle/ {N \choose \n+1} \right. }{\left. {k \choose \n+1} \middle/ {N \choose \n+1} \right. + \left(1- \left. {k \choose \n+1} \middle/ {N \choose \n+1} \right.\right)\cdot r} \\
&=& \frac{1}{1 - r + r\cdot \left. {N \choose \n+1} \middle/ {k \choose \n+1} \right.} \\
&\geq& \frac{1}{1 + r\cdot \left. {N \choose \n+1} \middle/ {k \choose \n+1} \right.} .
\end{eqnarray*}
When $f(x) = 1$, the hardest case to decide is when $k = \n + 1$, and therefore 
\begin{eqnarray*}
p_1 &\geq& \frac{1}{1 + r\cdot \left. {N \choose \n+1} \middle/ {\n +1 \choose \n+1} \right.}  \\
&=& \frac{1}{1 + r\cdot {N \choose \n+1} }. 
\end{eqnarray*}
Choosing $r = \frac{1}{2{N \choose \n+1}}$, we have $p_1 \geq 2/3$. 

In the case that $f(x)=0$, we have $k \leq \n$, and therefore $p_1 = 0$ always. Hence, this algorithm gives us bounded one-sided error of $1/3$. 
\\
\\
Since the error is one-sided, we can take $r$ to be arbitrarily small, taking the error arbitrarily close to zero without making any extra queries to the input. In this way, we obtain an algorithm for approximating the Majority function using at most $\n + 1$ queries for \emph{any} error $\epsilon \in (0,1/2)$. 
\\
\\
Conversely, the polynomial degree (both exact and approximate) of the Majority function is $N$~\cite{paturi1992degree}, and hence we can obtain at most a constant factor improvement over the decision tree complexity by using post-selection.

\section{Approximate counting using post-selection}\label{sec:counting}
It is possible to characterise $\PostBPP$ in terms of approximate counting problems (see e.g. \cite{o2016weakness}). This observation provides an alternative perspective on quantum supremacy results~\cite{harrow2017quantum}, as well as giving us a `post-selection free' way to define the complexity classes related to $\PostBPP$. Approximate counting was introduced by Sipser~\cite{sipser1983complexity} and Stockmeyer~\cite{stockmeyer1983complexity, stockmeyer1985approximation} as the problem of estimating the number of satisfying assignments to a given circuit or formula. In the query complexity setting, we can view this as the problem of estimating the Hamming weight $|x|$ of a bit-string $x$. In this context, the `weak' approximate counting problem is defined to be the problem of estimating $|x|$ up to a factor of 2, and `strong' approximate counting to be the problem of estimating $|x|$ up to a factor of $(1+1/p)$ for some `accuracy parameter' $p\geq1$. 

Here we show that there exist efficient post-selected classical query algorithms for both versions of the approximate counting problem. By Theorem \ref{theo:main}, this implies that there exist low-degree rational functions with positive coefficients for approximating the Hamming weight of a bit-string. 

\begin{theorem}
Given an $N$-bit string $x$ and an integer $p \geq 1$, there exists a post-selected classical query algorithm that outputs $\widetilde{|x|}$ such that
\[
\frac{1}{(1+1/p)} |x| \leq \widetilde{|x|} \leq (1+1/p)|x|
\]
with probability $\geq 1-\epsilon$, by making at most $O(p \log N \log(\log N/\epsilon))$ queries to the input.
\end{theorem}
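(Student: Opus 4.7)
The plan is to reduce approximate counting to a series of threshold tests, each doable efficiently under post-selection, and then combine them via binary search inside a single (flat, non-nested) post-selected algorithm.

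\textbf{Step 1 (Threshold primitive).} First I would construct a post-selected query algorithm $\mathcal{T}_\alpha$ which, given a candidate fraction $\alpha \in (0,1]$, distinguishes the regimes $|x|/N \geq \alpha(1+1/p)$ and $|x|/N \leq \alpha/(1+1/p)$ with constant bias using $O(p)$ queries. The algorithm samples $k = \Theta(p)$ indices of $[N]$ uniformly at random (without replacement), queries the corresponding input bits, outputs $1$ if all of them are $1$, outputs $0$ with probability $r := \alpha^k$ otherwise, and outputs $\perp$ in the remaining case. By the analysis already used for the Majority algorithm in Section \ref{sec:lower_bound}, the conditional acceptance probability (given not $\perp$) is essentially $(|x|/N)^k \big/ \bigl( (|x|/N)^k + r(1 - (|x|/N)^k) \bigr)$. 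The ratio between the two regimes is $(1+1/p)^{2k}$, which is a constant strictly greater than $1$ once $k = \Theta(p)$, yielding the required constant bias after post-selection.

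\textbf{Step 2 (Amplification).} A standard boosting argument upgrades $\mathcal{T}_\alpha$ to an algorithm with failure probability $\delta$ via $\ell = O(\log(1/\delta))$ independent repetitions and a majority vote. This can be folded into a single classical post-selected query algorithm by wrapping the $\ell$ runs into one distribution over decision trees whose combined post-selection event is ``no repetition returned $\perp$'', with total query cost $O(p \log(1/\delta))$; this avoids the nested post-selection loophole discussed in Appendix \ref{app:nested}.

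\textbf{Step 3 (Binary search and union bound).} The interval $[1,N]$ is covered by $M = \lceil \log_{1+1/p} N \rceil = O(p \log N)$ geometric sub-intervals $[(1+1/p)^i, (1+1/p)^{i+1}]$. I would binary search for the sub-interval containing $|x|$, using $R = \lceil \log_2 M \rceil = O(\log p + \log \log N)$ adaptive calls to the amplified threshold primitive at appropriate values of $\alpha$. Taking each call to fail with probability at most $\delta = \epsilon/R$ and applying a union bound gives overall success probability $\geq 1-\epsilon$. The resulting query budget is
\[
O\!\left( R \cdot p \log(R/\epsilon) \right) \;=\; O\!\left( p \log N \log(\log N / \epsilon) \right),
\]
where the simplification absorbs $\log p$ into $\log N$ (one may assume $p \leq N$, otherwise multiplicative accuracy $(1+1/p)$ is trivial). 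The entire construction remains a single classical post-selected query algorithm (a distribution over adaptive decision trees, each outputting $\perp$ or an $\lceil \log_2 M \rceil$-bit label that encodes the estimate $\widetilde{|x|}$), so no nested post-selection is used.

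The main obstacle is making Step 1 quantitatively tight: the approximation $\binom{|x|}{k}/\binom{N}{k} \approx (|x|/N)^k$ is loose when $|x|$ is comparable to $k$, so the constant-bias claim near very small $|x|$ needs to be verified separately (e.g.\ by switching to sampling with replacement, or by bounding the ratio $\binom{|x|(1+1/p)}{k}/\binom{|x|/(1+1/p)}{k}$ directly). Once this is in hand, the remaining steps are bookkeeping with Chernoff bounds and a union bound.
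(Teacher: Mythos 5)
Your Steps 1 and 2 are sound and closely mirror the paper's own construction: the paper's verifier $\mathcal{V}(A,x)$ is essentially your $\mathcal{T}_\alpha$ with $k=2$ and $r = 2A^k/N^k$ (sampling \emph{with} replacement, which also resolves the binomial-coefficient issue you flag at the end), and the i.i.d.\ repetition in Step 2 does factorize correctly under a single global ``no run returned $\perp$'' event. Where you diverge is the architecture: the paper gets only a factor-$2$ estimate from $k=2$ tests, locates it by a \emph{linear} doubling scan over $O(\log N)$ guesses, and then invokes Stockmeyer's trick ($O(p)$ repetitions) to refine to $(1+1/p)$; you instead build the $(1+1/p)$ resolution directly into the primitive via $k=\Theta(p)$ samples and locate $|x|$ by binary search over an $O(p\log N)$-point geometric grid. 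Your threshold gap computation is correct (with $k=2p$ and $r=\alpha^k$ one gets conditional acceptance $\geq 4/5$ versus $\leq 1/3$ in the two regimes), and the slack accumulated by the binary search is fixable by a constant-factor adjustment of $p$.

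The genuine gap is in Step 3, precisely at the point where you assert the adaptive binary search ``remains a single classical post-selected query algorithm.'' Under one global post-selection event $E=\{\text{no sub-test returned }\perp\}$, the probability of following a given root-to-leaf path is proportional to the product of the \emph{unconditioned} outcome probabilities along that path; consequently the branching probability at an internal node is the per-test conditional probability reweighted by the ratio of the downstream not-$\perp$ probabilities of the two subtrees. For your primitive the not-$\perp$ probability at threshold $\alpha$ is roughly $(|x|/N)^k+\alpha^k$, which varies by factors as large as $N^{\Theta(k)}$ across the thresholds visited in different branches, so a branch whose descendants test large $\alpha$ is boosted by a factor that can dwarf the $(1-\delta)/\delta$ bias of the individual test; the union bound over the $R$ levels then says nothing. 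This is exactly the adaptive-composition face of the nested-post-selection problem of Appendix \ref{app:nested}, and it is not avoided merely by declaring one global conditioning event. The paper's linear scan escapes it because the scan can be de-adaptivized for free (run all $O(\log N)$ batches unconditionally and output the first $0$-majority index, so the global conditioning factorizes over independent runs), whereas de-adaptivizing your binary search means testing all $O(p\log N)$ grid points at $O(p)$ queries each, which overshoots the claimed bound by a factor of $p$. To rescue your route you would need either to equalize the not-$\perp$ probabilities across branches (hard, since they depend on the unknown $|x|$) or to fall back on a non-adaptive coarse localization followed by the paper's style of refinement.
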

\begin{proof}

Consider the following algorithm, $\mathcal{V}(A, x)$, where $A$ is a `guess' at the value of $|x|$:
\begin{itemize}
\item Choose $k$ input bits uniformly at random, with replacement.
\item If $x_i = 1$ for all chosen bits, return 1.
\item Else if $x_i = 0$ for all chosen bits, then with probability $r$, return 0.
\item Else return $\perp$.
\item Post-select on not seeing $\perp$.
\end{itemize}
Clearly, only k queries are required. The probability of the algorithm outputting 1 is
\begin{eqnarray*}
p_1 := \Pr[\mathcal{V}(A,x) = 1 | \mathcal{V}(A,x) \neq \perp] &=& \frac{\Pr[\mathcal{V}(A,x)=1]}{\Pr[\mathcal{V}(A,x) \neq \perp]} \\
&=& \frac{\left(|x|/N\right)^k}{\left(|x|/N\right)^k + (1-|x|/N)^kr} \\
&=& \frac{1}{1 + r\frac{N^k}{|x|^k}(1 - |x|/N)^k}.
\end{eqnarray*}
Our approach will be to guess a value $A = 2^i$ for some $0 \leq i \leq \lceil \log_2 N \rceil$. Then by choosing $r$ appropriately, the algorithm will output 1 with high probability if our guess is at least a factor of 2 below the true value of $|x|$, and will output 0 with high probability if our guess is larger than the true value of $|x|$ by a factor of 2. We will choose $r = 2A^k/N^k$, so that
\[
p_1 = \frac{1}{1 + \frac{2A^k}{|x|^k}\left(1 - \frac{|x|}{N} \right)^k}.
\]
Consider a guess $A$ for which $A < |x|/2$. In this case,
\[
p_1 > \frac{1}{1 + \frac{2}{2^k}\left( 1 - \frac{|x|}{N} \right)^k} \geq \frac{1}{1 + \frac{2}{2^k}\frac{(N-1)^k}{N^k}}.
\]
This value is exponentially (in $k$) close to 1.
Now consider a guess $A$ for which $A > 2|x|$. In this case,
\[
p_1 < \frac{1}{1 + 2\cdot2^k\left( 1 - \frac{|x|}{N} \right)^k}.
\]
Since $A > 2|x|$, we can assume that $|x| \leq N/2$ (since we won't guess a value of $A$ that is greater than $N$), and thus
\[
p_1 < \frac{1}{1 + \frac{2\cdot2^k}{2^k}} = \frac{1}{3}.
\]
By taking $k=2$, when $A \leq |x|/2$, $p_1 \geq 2/3$, and when $A \geq 2|x|$, $p_1 \leq 1/3$. (Interestingly, choosing a larger value of $k$ doesn't help us, since it can only increase the probability that the algorithm returns 1 when $A \leq |x|/2$, but not the probability that the algorithm returns 0 when $A \geq 2|x|$).
\\\\
As we double our guess $A$, we will at some point encounter a run of the algorithm that returns 0 with probability greater than $2/3$. At this point, we can stop our search, and output $A$ as our estimate for $|x|$. This estimate will be within a factor of $4$ of the true value of $|x|$, with high probability.

The overall algorithm will be:
\begin{enumerate}
\item For $i = 0$ to $\lceil\log_2 N\rceil$:
	\begin{enumerate}
	\item Set $A = \min\{2^i, N\}$, and run $\mathcal{V}(A,x)$ $O(\log(\log N/\epsilon))$ times.
	\item If the majority of runs return $0$, then we return $A$ as our guess at $|x|$.
	\item Otherwise, continue.
	\end{enumerate}
\item If $i = \lceil\log_2 N\rceil$, then return $N$ as our guess for $|x|$.
\end{enumerate}
By repeating the algorithm $\mathcal{V}$ $O(\log(\log N/\epsilon))$ times in step 1(a) and taking the majority, we amplify the acceptance/rejection probability of the algorithm: if $A \geq 2|x|$, the majority answer will be 0 with probability $\geq 1-\frac{\epsilon}{\log N}$, and if $A < |x|/2$, the majority answer will be 1 with probability $\geq 1-\frac{\epsilon}{\log N}$. For intermediate values of $A$, the algorithm might return 0 or 1 with probabilities close to 1/2. In this case, if the algorithm returns 1, we continue as normal. Else, if it returns 0, then we will be at most a factor of $1/2$ away from the true value of $|x|$. 

In the worst case, when $|x| > N/2$, then at most $m \leq \lceil\log_2 N\rceil$ rounds of the algorithm will be performed. In each round, the algorithm erroneously returns $1$ with probability at most $\epsilon / \log N$. Hence, by the union bound, the probability that the algorithm fails in at least one of these rounds is at most $m \epsilon / \log N = \epsilon$. Otherwise, the probability that the algorithm doesn't return $0$ in the first round that it guesses an $A \geq 2|x|$ is at most $\epsilon / \log N < \epsilon$. Thus, the algorithm fails with probability at most $\epsilon$. 

If the algorithm correctly halts after the first guess in which $A$ is larger than $2|x|$, then our estimate will be at most $4|x|$. 
Hence we obtain an estimate $\widetilde{|x|}$ of $|x|$ such that
\[
\frac{|x|}{2} \leq \widetilde{|x|} \leq 4|x|
\]
with probability at least $1-\epsilon$, making at most $O(\log N \log(\log N/\epsilon)$ queries to the input. We can improve the accuracy so that the estimate is within a factor of 2 by increasing the number of queries by a constant factor. Hence, this algorithm gives a solution to the `weak' approximate counting problem.

We can use Stockmeyer's trick~\cite{stockmeyer1985approximation} to obtain an algorithm that can produce an estimate of $|x|$ such that
\[
\frac{1}{(1 + 1/p)} |x| \leq \widetilde{|x|} \leq (1 + 1/p)|x|
\]
for arbitrary $p \geq 1$ by repeating the above algorithm $O(p)$ times. This is a solution to the `strong' approximate counting problem, and we can obtain such a solution using post-selection by making at most $O(p \log N \log(\log N/\epsilon)$ queries to the input. 
\end{proof}

We remark that this algorithm is almost optimal, since any algorithm that makes substantially fewer queries to the input would be able to violate the lower bound from Theorem \ref{theo:bound}. In particular, by choosing $p \geq \frac{1}{\sqrt{1+2/N}-1} \approx N$, and $\epsilon = 1/3$ above, we obtain an algorithm for approximating the Majority function up to accuracy $1/3$. Hence, any sub-linear dependence on $p$ is ruled out by the Majority lower bound.

\appendix 
\section{Post-selection algorithms with post-selection sub-routines}\label{app:nested}
In this section, we show that if one is allowed to use a post-selection algorithm as a sub-routine inside another post-selection algorithm, then all Boolean functions can be computed up to bounded error using only a single query.

The algorithm works as follows: Guess a bit-string $y$, post-select on it being the same as the input $x$ (using a single query), and then return $f(y)$. We first construct a verifier $V(x,y)$, which will accept with certainty if $y = x$, and reject with a probability arbitrarily close to 1 when $x \neq y$. It works as follows:
\begin{itemize}
\item Choose an index $i \in [n]$ uniformly at random.
\item If $x_i \neq y_i$, return 0.
\item Else, return 1 with probability $1/K$, for some $K$ to be determined later.
\item Else, return $\perp$.
\item Post-select on not seeing $\perp$.
\end{itemize}
This procedure requires a single query to $x$. To see correctness, first consider the case in which $y=x$. Then, the probability of seeing 1 is
\begin{eqnarray*}
\Pr[\text{see 1} | \text{not see} \perp] &=& \frac{\Pr[\text{see 1}]}{\Pr[\text{not see} \perp]} \\
&=& \frac{\Pr[x_i = y_i] \cdot \frac{1}{K} }{\Pr[x_i \neq y_i] + \Pr[x_i = y_i] \cdot \frac{1}{K}} \\
&=& 1.
\end{eqnarray*}
Now suppose that $y \neq x$. The hardest case to distinguish is when $y$ differs from $x$ on only a single bit:
\begin{eqnarray*}
\Pr[\text{see 0} | \text{not see} \perp] &=& \frac{\Pr[\text{see 0}]}{\Pr[\text{not see} \perp]} \\
&=& \frac{\Pr[x_i \neq y_i] }{\Pr[x_i \neq y_i] + \Pr[x_i = y_i] \cdot \frac{1}{K}} \\
&=& \frac{ \frac{1}{n} }{\frac{1}{n} + \left(1 - \frac{1}{n}\right) \cdot \frac{1}{K}} \\
&\geq& \frac{ \frac{1}{n} }{\frac{1}{n} +  \frac{1}{K}} \\
&=& \frac{ 1 }{ 1 +  \frac{n}{K}}.
\end{eqnarray*}
We can choose $K$ to be arbitrarily large in order to make this probability as close to 1 as needed. 

Now that we have our verifier, we can use it to find a $y$ such that $y = x$. The procedure is similar:
\begin{itemize}
\item Choose a $y$ uniformly at random. 
\item If $V(x,y) = 1$, return $f(y)$.
\item Else, return $\perp$.
\item Post-select on not seeing $\perp$.
\end{itemize}
Since each use of $V(x,y)$ requires a single query to $x$, this procedure again only requires a single query to the input. Let $A$ be the number of inputs in $f^{-1}(1)$, $N := 2^n$, and define $\epsilon := 1 - \frac{1}{1 + \frac{n}{K}}$. 

Suppose that $f(x) = 1$. The probability of seeing $1$ is the probability that we choose a $y \in f^{-1}(1)$ multiplied by the probability that the verifier accepts such a bit-string. If $x=y$, this happens with certainty, and if $x\neq y$, this happens with probability at most $\epsilon$. Thus,
\[
\frac{1}{N} \leq p_1 \leq \frac{A}{N}\left(\frac{1}{A} + \frac{\epsilon(A-1)}{A} \right) = \frac{1 + \epsilon A - \epsilon}{N}.
\]
On the other hand, the probability of seeing the `wrong' answer, 0, is given by the probability of choosing a $y \in f^{-1}(0)$ multiplied by the probability that the verifier accepts, which happens with probability at most $\epsilon$ (since $x\neq y$ in the case that $f(x)\neq f(y)$):
\[
p_0 \leq \frac{\epsilon(N-A)}{N}.
\]
Finally, 
\begin{eqnarray*}
\Pr[\text{see 1}|\text{not see}\perp] &=& \frac{\Pr[\text{see 1}]}{\Pr[\text{not see}\perp]} \\
&=& \frac{p_1}{p_1 + p_0} \\
&\geq& \frac{\frac{1}{N}}{\frac{1 + \epsilon A - \epsilon}{N} +  \frac{\epsilon(N-A)}{N}} \\
&=& \frac{1}{1 + \epsilon A - \epsilon +  \epsilon N - \epsilon A} \\
&=& \frac{1}{1 - \epsilon +  \epsilon N } \\
&\geq& \frac{1}{1 +  \epsilon N }.
\end{eqnarray*}
By choosing $\epsilon = \frac{1}{2N}$, and hence $K = 2nN$, we can ensure that this probability is greater than $2/3$. An identical argument follows in the case that $f(x) = 0$. 

Note that we can't `flatten' this algorithm to obtain one that produces the same result, but only performs one post-selection step.

\section{Generalisation of the lower bound to arbitrary symmetric functions}\label{app:generalisation}
Here we prove the more general version of the lower bound from Theorem \ref{theo:generalisation}, restated here for convenience:
\theogeneralisation*
\begin{proof}
Since $f_T \neq f_{T+1}$, there must be a step-change in the value of $f(x)$ at the point $|x| = T$. Without loss of generality, we can assume that this change is such that $f(T) = 0$ and $f(T+1) = 1$. Then, assuming that there exists some (univariate) rational function $p(x)/q(x)$ that $\epsilon$-approximates $f$, we have
\[
q\left(T\right) = \stsum \gamma_{S,T} b_{S,T} \Delta_{S,T}
\]
and
\[
q\left(T+1\right) = \stsum \gamma_{S,T} b_{S,T} \Delta_{S,T} \cdot \frac{\left(T+1\right)\left(N - T - |T|\right)}{(N-T) \left(T - |S| + 1\right)}.
\]
In the following, we can assume that $|T| \leq N-T$ and $|S| \leq T+1$ (since any monomial with $|S|$ or $|T|$ larger would necessarily equal zero and not contribute to the polynomials considered above). 
We can lower bound the term on the right using the fact that $|S| + |T| \leq d$:
\[
\frac{\left(T+1\right)\left(N - T - |T|\right)}{(N-T) \left(T - |S| + 1\right)} \geq
\frac{\left(T+1\right)\left(N - T - d\right)}{(N-T) \left(T + 1\right)} = 
\frac{\left(N - T - d\right)}{(N-T) }.
\]
Hence, 
\[
q(T) \leq \frac{N-T}{N-T-d} \cdot q(T+1)
\]
and so 
\begin{equation}\label{eq:first_ineq}
p(T) \leq \epsilon q(T) \leq \frac{\epsilon(N-T)}{N-T-d} q(T+1) \leq \frac{\epsilon(N-T)}{(1-\epsilon)(N-T-d)}p(T+1).
\end{equation}
The next step is to upper bound the value of $\frac{\epsilon(N-T)}{(1-\epsilon)(N-T-d)}\cdot \frac{(T+1)(N - T - |T|)}{(N-T) (T - |S| + 1)}$. Using $|S| + |T| \leq d$, we have
\[
\frac{\epsilon(N-T)}{(1-\epsilon)(N-T-d)}\cdot \frac{(T+1)(N - T - |T|)}{(N-T) (T - |S| + 1)} \leq 
\frac{\epsilon(N-T)}{(1-\epsilon)(N-T-d)}\cdot \frac{(T+1)(N - T)}{(N-T) (T - d + 1)}
\]
\[
= \frac{\epsilon(N-T)}{(1-\epsilon)(N-T-d)}\cdot \frac{(T+1)}{(T - d + 1)}.
\]
We consider the values of $d$ for which this quantity is strictly smaller than 1:
\begin{eqnarray*}
\frac{\epsilon(N-T)}{(1-\epsilon)(N-T-d)}\cdot \frac{(T+1)}{(T - d + 1)} &<& 1 \\
\frac{\epsilon(N-T)(T+1)}{(1-\epsilon)}  &<& (N-T)(T+1) - (T+1)d - (N-T)d + d^2 \\
0 &<& d^2 - (N+1)d + (N-T)(T+1)\left(1 - \frac{\epsilon}{1-\epsilon}\right)   \\
\end{eqnarray*}
Writing $\alpha := \left(1 - \frac{\epsilon}{1-\epsilon}\right)$, the roots of the inequality are given by:
\begin{eqnarray*}
2d &=& (N+1) \pm \sqrt{(N+1)^2 - 4\alpha(N-T)(T+1)} \\
\end{eqnarray*}
We can obtain a lower bound by considering the value of $(N+1) - \sqrt{(N+1)^2 - 4\alpha(N-T)(T+1)}$. In this case, we have the lower bound
\[
2d \geq (N+1) - \sqrt{(N+1)^2 - 4\alpha(T+1)(N+1) + 4\alpha(T+1)^2}.
\]
If this bound is violated then inequality (\ref{eq:first_ineq}) cannot be satisfied, and hence no rational approximation can exist with degree smaller than this lower bound.
\\
\\
In particular, if we take $\epsilon = 1/3$, we have
\begin{eqnarray*}
2d &\geq& N + 1 - \sqrt{(N+1)^2 - 2(N-T)(T+1)} \\
&=& N + 1 - \sqrt{(N-T)^2 + (T+1)^2}.
\end{eqnarray*}
It is straightforward to check that a suitable lower bound is 
\begin{eqnarray*}
d &\geq& \frac{N}{8} - \frac{1}{4}\left| \n - T\right| = \frac{1}{8}(N - \Gamma(f)).
\end{eqnarray*}
\end{proof}
As we pointed out in Section \ref{sec:seps}, this result is quite similar to a result of Paturi~\cite{paturi1992degree}, and characterises the rational degree (and hence the post-selected classical query complexity) of symmetric functions.

\section*{Acknowledgements}

I would like to thank Ashley Montanaro for helpful discussions and careful comments on this paper, and Ronald de Wolf for helpful comments and for pointing out the results in reference \cite{kaniewski2015query}, as well as the alternative proof for Lemma \ref{lem:7}. This work was supported by the EPSRC. No new data were created in this study.

\bibliography{references_post_selection.bib}

\begin{thebibliography}{10}

\bibitem{bpppath}
Complexity zoo: {$\mathsf{BPP_{path}}$}.
\newblock \url{https://complexityzoo.uwaterloo.ca/Complexity_Zoo:B#bpppath}.
\newblock Accessed: 11-01-2018.

\bibitem{aaronson2003quantum}
S.~Aaronson.
\newblock Quantum certificate complexity.
\newblock In {\em Proceedings of the 18th IEEE Annual Conference on
  Computational Complexity, 2003}, pages 171--178. IEEE, 2003.
\newblock {\tt arXiv:quant-ph/0210020}.

\bibitem{aaronson2005quantum}
S.~Aaronson.
\newblock Quantum computing, postselection, and probabilistic polynomial-time.
\newblock In {\em Proceedings of the Royal Society of London A: Mathematical,
  Physical and Engineering Sciences}, volume 461, pages 3473--3482, 2005.
\newblock {\tt arXiv:quant-ph/0412187}.

\bibitem{aaronson2016separations}
S.~Aaronson, S.~Ben-David, and R.~Kothari.
\newblock Separations in query complexity using cheat sheets.
\newblock In {\em Proceedings of the forty-eighth annual ACM symposium on
  Theory of Computing}, pages 863--876. ACM, 2016.
\newblock {\tt arXiv:1511.01937}.

\bibitem{bremner2010classical}
M.~Bremner, R.~Jozsa, and D.~Shepherd.
\newblock Classical simulation of commuting quantum computations implies
  collapse of the polynomial hierarchy.
\newblock In {\em Proceedings of the Royal Society of London A: Mathematical,
  Physical and Engineering Sciences}, 2010.
\newblock {\tt arXiv:1005.1407}.

\bibitem{buhrman1999bounds}
H.~Buhrman, R.~Cleve, R.~de~Wolf, and C.~Zalka.
\newblock Bounds for small-error and zero-error quantum algorithms.
\newblock In {\em Foundations of Computer Science, 1999. 40th Annual Symposium
  on}, pages 358--368. IEEE, 1999.

\bibitem{gavinsky2014route}
D.~Gavinsky and S.~Lovett.
\newblock En route to the log-rank conjecture: New reductions and equivalent
  formulations.
\newblock In {\em International Colloquium on Automata, Languages, and
  Programming}, pages 514--524. Springer, 2014.

\bibitem{goos2016rectangles}
M.~G\"o\"os, S.~Lovett, R.~Meka, T.~Watson, and D.~Zuckerman.
\newblock Rectangles are nonnegative juntas.
\newblock {\em SIAM Journal on Computing}, 45(5):1835--1869, 2016.

\bibitem{grover1996fast}
L.~Grover.
\newblock A fast quantum mechanical algorithm for database search.
\newblock In {\em Proceedings of the twenty-eighth annual ACM symposium on
  Theory of computing}, pages 212--219. ACM, 1996.
\newblock {\tt arXiv:quant-ph/9605043}.

\bibitem{han1997threshold}
Y.~Han, L.~Hemaspaandra, and T.~Thierauf.
\newblock Threshold computation and cryptographic security.
\newblock {\em SIAM Journal on Computing}, 26(1):59--78, 1997.

\bibitem{harrow2017quantum}
A.~Harrow and A.~Montanaro.
\newblock Quantum computational supremacy.
\newblock {\em Nature}, 549(7671):203, 2017.

\bibitem{kaniewski2015query}
J.~Kaniewski, T.~Lee, and R.~de Wolf.
\newblock Query complexity in expectation.
\newblock In {\em International Colloquium on Automata, Languages, and
  Programming}, pages 761--772. Springer, 2015.
\newblock {\tt arXiv:1411.7280}.

\bibitem{klauck2003rectangle}
H.~Klauck.
\newblock Rectangle size bounds and threshold covers in communication
  complexity.
\newblock In {\em Computational Complexity, 2003. Proceedings. 18th IEEE Annual
  Conference on}, pages 118--134. IEEE, 2003.
\newblock {\tt arXiv:cs/0208006}.

\bibitem{lund2017quantum}
A.~Lund, M.~J Bremner, and T.~Ralph.
\newblock Quantum sampling problems, {B}oson {S}ampling and quantum supremacy.
\newblock {\em Quantum Information}, 3(1):15, 2017.
\newblock {\tt arXiv:1702.03061}.

\bibitem{mahadev2015rational}
U.~Mahadev and R.~de~Wolf.
\newblock Rational approximations and quantum algorithms with postselection.
\newblock {\em Quantum Information \& Computation}, 15(3\&4):295--307, 2015.
\newblock {\tt arXiv:1401.0912}.

\bibitem{minsky2017perceptrons}
M.~Minsky and S.~Papert.
\newblock {\em Perceptrons: An introduction to computational geometry}.
\newblock 1969.

\bibitem{nisan1994degree}
N.~Nisan and M.~Szegedy.
\newblock On the degree of {B}oolean functions as real polynomials.
\newblock {\em Computational complexity}, 4(4):301--313, 1994.

\bibitem{o2016weakness}
R.~O'Donnell and A.~Say.
\newblock The weakness of {CTC} qubits and the power of approximate counting.
\newblock In {\em Electronic Colloquium on Computational Complexity (ECCC)},
  volume~23, page 147, 2016.

\bibitem{paturi1992degree}
R.~Paturi.
\newblock On the degree of polynomials that approximate symmetric boolean
  functions.
\newblock In {\em Proceedings of the twenty-fourth annual ACM symposium on
  Theory of computing}, pages 468--474. ACM, 1992.

\bibitem{sherstov2009intersection}
A.~Sherstov.
\newblock The intersection of two halfspaces has high threshold degree.
\newblock In {\em Foundations of Computer Science, 2009. FOCS'09. 50th Annual
  IEEE Symposium on}, pages 343--362. IEEE, 2009.
\newblock {\tt arXiv:0910.1862}.

\bibitem{sipser1983complexity}
M.~Sipser.
\newblock A complexity theoretic approach to randomness.
\newblock In {\em Proceedings of the fifteenth annual ACM symposium on Theory
  of computing}, pages 330--335. ACM, 1983.

\bibitem{stockmeyer1983complexity}
L.~Stockmeyer.
\newblock The complexity of approximate counting.
\newblock In {\em Proceedings of the fifteenth annual ACM symposium on Theory
  of computing}, pages 118--126. ACM, 1983.

\bibitem{stockmeyer1985approximation}
L.~Stockmeyer.
\newblock On approximation algorithms for $\# {P}$.
\newblock {\em SIAM Journal on Computing}, 14(4):849--861, 1985.

\bibitem{toda1991pp}
S.~Toda.
\newblock $\mathsf{PP}$ is as hard as the polynomial-time hierarchy.
\newblock {\em SIAM Journal on Computing}, 20(5):865--877, 1991.

\bibitem{de2000characterization}
R.~de Wolf.
\newblock Characterization of non-deterministic quantum query and quantum
  communication complexity.
\newblock In {\em Computational Complexity, 2000. Proceedings. 15th Annual IEEE
  Conference on}, pages 271--278. IEEE, 2000.
\newblock {\tt arXiv:cs/0001014}.

\bibitem{de2003nondeterministic}
R.~de Wolf.
\newblock Nondeterministic quantum query and communication complexities.
\newblock {\em SIAM Journal on Computing}, 32(3):681--699, 2003.

\end{thebibliography}

\end{document}